\newtheorem{theorem}{Theorem}
\newtheorem{lemma}{Lemma}
\newtheorem{definition}{Definition} 
\def\th@plain{%
  \thm@notefont{}
  \normalfont
}
\newtheoremstyle{myremark}{}{}{\itshape}{}{\itshape}{:}{ }{}
\theoremstyle{myremark}
\newtheorem{remark}{Remark} 
\DeclareMathOperator*{\argmin}{arg\,min}
\newcommand{\N}{\mathbb{N}}
\newcommand{\R}{\mathbb{R}}
\renewcommand{\P}{\mathbb{P}}
\begin{document}

\title{Rate Distortion Approach to Joint Communication and Sensing with Markov States: Open Loop Case}

\author{%
    \IEEEauthorblockN{Colton P. Lindstrom and Matthieu R. Bloch}
    \IEEEauthorblockA{School of Electrical and Computer Engineering, Georgia Institute of Technology, Atlanta, Georgia 30332, USA\\
    Email: coltonlindstrom@gatech.edu, matthieu.bloch@ece.gatech.edu}
}

\maketitle

\begin{abstract}
We investigate a joint communication and sensing (JCAS) framework in which a transmitter concurrently transmits information to a receiver and estimates a state of interest based on noisy observations.
The state is assumed to evolve according to a known dynamical model.
Past state estimates may then be used to inform current state estimates.
We show that Bayesian filtering constitutes the optimal sensing strategy.
We analyze JCAS performance under an open loop encoding strategy with results presented in terms of the tradeoff between asymptotic communication rate and expected per-block distortion of the state.
We illustrate the general result by specializing the analysis to a beam-pointing model with mobile state tracking.
Our results shed light on the relative performance of two beam control strategies, beam-switching and multi-beam.
\end{abstract}

\section{Introduction}\label{sec:introduction}
Recently, significant attention has been directed toward research in Joint Communication and Sensing (JCAS).
This is in part because it is widely anticipated to feature in next generation communication systems \cite{liu_integrated_2022,liu_survey_2022}.
Research in JCAS spans diverse approaches, including 
full hardware implementation \cite{baquero_barneto_full-duplex_2019, liyanaarachchi_optimized_2021, wang_first_2019}, 
exploitation of current infrastructure \cite{stinco_ieee_2016, berger_signal_2010, hack_detection_2014}, 
waveform design \cite{huang_loradar_2022, shi_power_2018, liu_toward_2018}, 
security \cite{su_secure_2021, deligiannis_secrecy_2018, welling2024transmitter, welling2024low}, 
and fundamental limitations.
We focus here on the later.
Generally, sensing signals require deterministic waveforms while communication signals rely on randomness to embed information, resulting in natural tradeoffs and limitations \cite{liu_deterministic-random_2023}.

Prior works have explored these limitations and tradeoffs for various models.
Of particular relevance to the present work, \cite{zhang2011joint} has suggested a rate-distortion approach to joint transmission and state estimation in which an i.i.d. state estimation cost constraint is reinterpreted as a distortion constraint resulting in a constrained channel coding problem.
\cite{ahmadipour_information-theoretic_2022} has extended this framework to monostatic JCAS systems, and \cite{ahmadipour_collaborativeJCAS} has extended this framework to collaborative multiuser JCAS systems.
\cite{li_capacity_2023, li2024capacity} have specialized the rate-distortion analysis for JCAS systems to a directional beam pointing problem, showing how JCAS operations improve system performance during initial beam acquisition.

Whereas \cite{zhang2011joint, ahmadipour_information-theoretic_2022, ahmadipour_collaborativeJCAS, li_capacity_2023, li2024capacity} model the channel state as a sequence of i.i.d. random variables, we extend the rate-distortion JCAS framework to models in which the state sequence is not i.i.d. random.
Rather, the state evolves according to some dynamical model so that past estimates of the state inform present estimates.
Furthermore, present estimates of the state provide some predictive power of future states, which in general make single letter expressions for rate distortion regions infeasible.
The analysis assumes the state varies with each channel use.
States that evolve significantly slower may be modeled as being constant over the duration of a codeword, at which point the analysis of \cite{chang2023rate, chang_sequential_2023, joudeh2022joint, wu2024joint} is more appropriate.

We propose a general JCAS system model with mobile state and causal estimates, which we analyze using a rate distortion approach.
To illustrate potential applications, we specialize the model to a beam pointing JCAS problem and provide numerical illustrations of the associated JCAS tradeoffs.
Our results in Lemma 1 and Theorem 1 are identical to those in \cite{nikbakht2024memory}, which we discovered upon finalizing this manuscript. 
However, our subsequent analysis of open loop strategies and applications to beamforming strategies offers distinct insights.

Section \ref{sec:notation} introduces the notation used throughout the paper.
Section \ref{sec:systemModel} defines the proposed JCAS model.
Section \ref{sec:mainResults} provides general expressions for the optimal estimator and capacity distortion region.
Section \ref{sec:beamPointingExample} specializes the model to the beam pointing problem.

\section{Notation}\label{sec:notation}
The indicator (or characteristic) function of a set $\Omega$ is denoted $\mathbf{1}\{\omega\in\Omega\}$.
Sets and discrete alphabets are indicated using calligraphic letters e.g., $\mathcal{X}$.
For a discrete set $\mathcal{X}$, the set $\mathcal{P_\mathcal{X}}$ is set of all probability distributions on $\mathcal{X}$.
Depending on context, capital letters denote matrices, constants, or random variables.
$A^T$ denotes the transpose of the matrix $A$, and $\text{tr}(A)$ denotes its trace.

For $n\in\N$, a sequence of length $n$ is denoted $x^n\triangleq(x_1,x_2,\cdots,x_n)$.
For $k,j\in\N$ with $k<j$, a subsequence of length $j-k+1$ is denoted $x_k^j\triangleq(x_k,x_{k+1},\cdots,x_j)$.
The $i$th element of a sequence $x^n$ is denoted $x_i$.

The mutual information between two random variables $X$ and $Y$ is denoted $I(X;Y)$.
$\log$ denotes the natural logarithm.

\section{System Model}\label{sec:systemModel}
Consider the joint communication and sensing model shown in Figure \ref{fig:systemModel}, in which a transmitter is tasked with reliably transmitting a message $m\in\mathcal{M}=[1;M]$ over a memoryless state dependent channel $P_{YZ\vert XS}$ while simultaneously estimating the state $s_i\in\mathcal{S}$.
The encoder generates codeword symbols $x_i\in\mathcal{X}$, which are used as input to the channel.
The channel produces two outputs: measurements of the channel state fed back to the transmitter $z_i\in\mathcal{Z}$, and received codeword symbols $y_i\in\mathcal{Y}$.

The decoder is assumed to know the state sequence $s^n$ and uses the received codeword $y^n$ to form an estimated message $\hat{m}$.
The channel measurements $z^i$ are processed to form causal state estimates $\hat{s}_i\in\hat{\mathcal{S}}$ where $\hat{\mathcal{S}}\subseteq\mathcal{S}$.
A new state estimate is produced at each time step $i$, and each estimate must be causal, meaning that measurements from time $i+1$ onward cannot be used to estimate the state at time $i$.
For this work, the state is assumed to be a first order Markov chain.
As such, the distribution of the state sequence can be written as $P_{S^n}(s^n)=\prod_{i=1}^nP_{S_i\vert S_{i-1}}(s_i|s_{i-1})$ (where $s_0$ is the initial state).

The sets $\mathcal{M}$, $\mathcal{X}$, and $\mathcal{Y}$ are all assumed to be finite.
The sets $\mathcal{Z}$, $\mathcal{S}$, and $\hat{\mathcal{S}}$ are assumed to be subsets of the real numbers.

Formally, an $(M,n)$ code for the proposed JCAS model is composed of the following:
\begin{enumerate}
    \item A sequence of encoding function $f_i:\mathcal{M}\times\mathcal{Z}^{i-1}\to\mathcal{X}$, $i=1,2,\dots,n$.
    \item A sequence of state estimation functions $g_{i-1}:\mathcal{X}^{i-1}\times\mathcal{Z}^{i-1}\to\hat{\mathcal{S}}$.
    \item A decoding function $h:\mathcal{Y}^n\times\mathcal{S}^n\to\mathcal{M}$.
\end{enumerate}
It is assumed that messages are drawn uniformly from the message set.

\tikzstyle{rect} = [rectangle, rounded corners, minimum width=0.5cm, minimum height=0.5cm, text centered, align=center, draw=black]
\tikzstyle{ghost} = [circle]
\tikzstyle{arrow} = [thick, ->, >=stealth]
\usetikzlibrary{calc}
\vspace{-1em}
\begin{figure}[H]
    \centering
    \begin{tikzpicture}
        \node (encode) [rect] {Encoder\\$f_i(m,z^{i-1})$};
        \node (est) [rect, above of=encode, yshift=0.5cm] {Estimator\\ $g_{i-1}(x^{i-1},z^{i-1})$};
        \node (chan) [rect, right of=encode, xshift=1.9cm] {Channel\\$P_{YZ\vert XS}$};
        \node (state) [rect, below of=chan, yshift=-0.3cm] {$P_{S_i\vert S_{i-1}}$};
        \node (decode) [rect, right of=chan, xshift=1.4cm] {Decoder\\$h(y^n,s^n)$};
        \node (delay) [rect, right of=est, xshift=1.9cm] {Delay};

        \node (message) [ghost, left of=encode, xshift=-0.7cm] {};
        \node (estimate) [ghost, left of=est, xshift=-0.9cm] {};
        \node (receive) [ghost, right of=decode, xshift=0.2cm] {};
        
        \draw [arrow] (message.west) -- node[anchor=south] {$m$} (encode);
        \draw [arrow] (encode) -- node[anchor=north] {$x_i$} (chan);
        \draw [arrow] (encode) -| ($(est.east)+(3mm,-2mm)$) |- ([yshift=-2mm] est);
        \draw [arrow] (chan) -- node[anchor=north] {$y_i$} (decode);
        \draw [arrow] ([yshift=2mm] chan) -| ($(chan.east)+(2mm, 5mm)$) |- node[anchor=west] {$z_i$} (delay);
        \draw [arrow] ([yshift=2mm] chan) -| ($(chan.east)+(2mm, 5mm)$) |-  (delay);
        \draw [arrow] (est) -- node[anchor=south, xshift=-0.1cm] {$\hat{s}_{i-1}$} (estimate.west);
        \draw [arrow] (state) -- node[anchor=west] {$s_i$} (chan);
        \draw [arrow] (state) -| (decode.south);
        \draw [arrow] (delay.west) -| ($(encode.north)+(20mm, 3mm)$) -| (encode);
        \draw [arrow] (delay) -- node[anchor=south] {$z^{i-1}$} (est);
        \draw [arrow] (decode) -- node[anchor=south] {$\hat{m}$} (receive.east);
    \end{tikzpicture}
    \caption{Block diagram for the proposed JCAS model with dynamic state.}
    \label{fig:systemModel}
\end{figure}
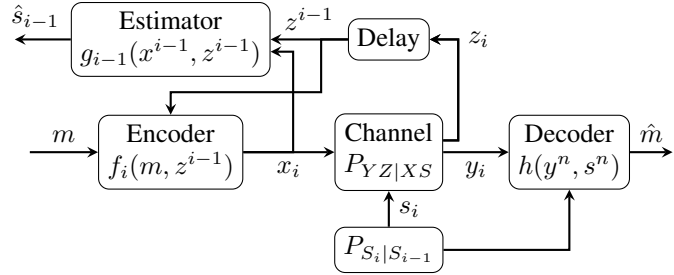
\vspace{-1em}

The performance of the model is measured in terms of asymptotic rate of reliable communication and an average per-block distortion.
The probability of communication error for a length $n$ code is defined as
\begin{equation}\label{eqn:communicationError}
    P_c^{(n)}\triangleq\max_{m\in\mathcal{M}}\mathbb{P}(h(y^n,s^n)\neq m\vert m).
\end{equation}
The communication rate is defined as $R\triangleq\frac{1}{n}\log M$.
The objective of the communication subsystem is to maximize the rate while maintaining a small probability of communication error.

The \textit{per-letter distortion function} is defined as 
\begin{equation}\label{eqn:letterDistortionDefinition}
    d_i:\mathcal{S}\times\hat{\mathcal{S}}\to\R^+ 
    \text{ denoted } d_i(s_i,\hat{s}_i)
    \text{ for } i\in[0;n].
\end{equation}
The \textit{per-block distortion function} can then be defined as the average of the per-letter distortions over $n+1$ time steps.
\begin{equation}\label{eqn:blockDistortionDefinition}
    d_{0,n}:\mathcal{S}^{n+1}\times\hat{\mathcal{S}}^{n+1}\to\R^+
    \text{, }d_{0,n}(s_0^n,\hat{s}_0^n)=\frac{1}{n+1}\sum_{i=0}^n d_i(s_i,\hat{s}_i).
\end{equation}
The objective of the state estimation subsystem is to minimize the \textit{average per-block distortion function}.
\begin{equation}\label{eqn:averageDistortionDefinition}
    \Delta^{(n)}\triangleq E\left[d_{0,n}(s_0^n,\hat{s}_0^n)\right]=\frac{1}{n+1}\sum_{i=0}^n E\left[d_i(s_i,\hat{s}_i)\right]
\end{equation}
Whether by the distortion function definition or the nature of the state space $\mathcal{S}$, the average per-block distortion is assumed upper bounded by $D_{\textnormal{max}}<\infty$.

\begin{remark}
    The per-block distortion is an average over $n+1$ time steps that includes the initial state $s_0$ and initial state estimate $\hat{s}_0$.
    It is assumed that the state estimator has an initial estimate that satisfies a user defined distortion threshold.
\end{remark}

\begin{definition}[Achievability]\label{def:achievability}
    A rate-distortion pair $(R,D)$ is achievable if, for any $\varepsilon>0$, there exists a large enough $n$ and an $(M,n)$ code such that,
    \begin{align}
        P_c^{(n)}&\leq\varepsilon\label{eqn:achievableError},\\
        \Delta^{(n)}&\leq D+\varepsilon\label{eqn:achievableDistortion},\\
        \frac{1}{n}\log M&\geq R-\varepsilon.\label{eqn:achievableRate}
    \end{align}
\end{definition}

The JCAS capacity region is the closure of the set of achievable rate-distortion pairs.

\begin{remark}
    In systems with a mobile state, scenarios may arise in which inadequate sensing leads to the complete loss of a target.
    Definition \ref{def:achievability} could be strengthened as
    \begin{equation}
        \Delta^{(i)}\leq L+\varepsilon\hspace{1em}i=0,\dots,n\label{eqn:achievableTracking}
    \end{equation}
    where $L$ is some threshold beyond which tracking of the target state is assumed lost or irretrievable. 
\end{remark}

\section{Main Results}\label{sec:mainResults}
We now characterize the open loop JCAS capacity region for the mobile state model.
This analysis is a natural generalization of \cite{ahmadipour_information-theoretic_2022}.
We begin by providing the form of the optimal estimator satisfying the constraints of the model.
This result facilitates the definition of a sensing cost and a cost-constrained input set.

\begin{lemma}\label{lem:optimalEstimator}
    Let $\hat{S}^{*n}=g^*(X^n,Z^n)$ denote the optimal state estimate sequence that minimizes the average per-block distortion $\Delta^{(n)}$.
    The optimal causal state estimate is,
    \begin{equation}\label{eqn:optimalEstimator}
        g^*(X^n,Z^n)\triangleq\left(g_1^*(X^1,Z^1),g_2^*(X^2,Z^2),\dots,g_n^*(X^n,Z^n)\right)
    \end{equation}
    where each symbol estimator $g_i^*(X^i,Z^i)$ is
    \begin{equation}\label{eqn:optimalSymbolEstimator}
        g_i^*(X^i,Z^i)=\argmin_{\hat{s}}\int_{\mathcal{S}}P_{S_i\vert X^i,Z^i}(s_i\vert x^i,z^i)d_i(s_i,\hat{s})ds_i.
    \end{equation}.
\end{lemma}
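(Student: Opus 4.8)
The plan is to exploit the additive structure of the average per-block distortion in \eqref{eqn:averageDistortionDefinition}. Writing
\begin{equation}
\Delta^{(n)} = \frac{1}{n+1}\sum_{i=0}^n E\!\left[d_i(S_i,\hat{S}_i)\right],
\end{equation}
I would observe that the estimator $g_i$ producing $\hat{S}_i = g_i(X^i,Z^i)$ enters only the $i$-th summand. The encoding functions $f_i$ and the channel law $P_{YZ\vert XS}$ fix the joint distribution of $(S^n,X^n,Z^n)$ regardless of the estimators, and the $i=0$ term involves only the fixed initial estimate $\hat{s}_0$ from the Remark. Hence minimizing $\Delta^{(n)}$ over the sequence $g^n$ decouples into $n$ separate minimizations, one per index $i\in[1;n]$, with no coupling constraints between them. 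It therefore suffices to minimize each term $E[d_i(S_i,\hat{S}_i)]$ on its own.

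For a fixed index $i$, the key step is to apply the tower property, conditioning on the causal observation $(X^i,Z^i)$ available to $g_i$:
\begin{equation}
E\!\left[d_i(S_i,g_i(X^i,Z^i))\right] = E_{X^i,Z^i}\!\left[\,E\big[d_i(S_i,g_i(X^i,Z^i))\,\big|\,X^i,Z^i\big]\right].
\end{equation}
Once the realization $(x^i,z^i)$ is fixed, $g_i(x^i,z^i)$ is a deterministic value $\hat{s}$, so the inner conditional expectation equals $\int_{\mathcal{S}} P_{S_i\vert X^i,Z^i}(s_i\vert x^i,z^i)\,d_i(s_i,\hat{s})\,ds_i$. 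The outer expectation averages nonnegative integrands over realizations of $(X^i,Z^i)$, so it is minimized by choosing, for each realization, the value of $\hat{s}$ that minimizes this inner integral. That pointwise minimizer is precisely \eqref{eqn:optimalSymbolEstimator}, and collecting these choices across $i$ gives the optimal sequence \eqref{eqn:optimalEstimator}.

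The main obstacle is justifying the interchange between minimizing over admissible estimation functions and minimizing the conditional cost pointwise for each observation. The crux is that a causal estimator $g_i$ is free to assign an arbitrary value to each realization $(x^i,z^i)$, with no constraint linking distinct realizations or distinct time indices; consequently the pointwise minimizer defines a valid estimator that attains the decoupled lower bound simultaneously for every term, and a measurable-selection argument guarantees such a $g_i^*$ exists. I would also emphasize that the first-order Markov assumption on $S^n$ plays \emph{no} role in this lemma—the statement is simply the Bayes-optimal estimator under a separable distortion—and that the Markov structure instead enters later, when the posterior $P_{S_i\vert X^i,Z^i}$ is evaluated via Bayesian filtering.
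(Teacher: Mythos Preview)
Your proposal is correct and follows essentially the same approach as the paper: decouple the additive per-block distortion into per-letter terms, apply the tower property conditioning on the causal observations $(X^i,Z^i)$, and minimize the inner conditional expectation pointwise to obtain \eqref{eqn:optimalSymbolEstimator}. Your write-up is in fact slightly more careful than the paper's (you explicitly justify the decoupling, handle the $i=0$ term, mention measurable selection, and correctly observe that the Markov assumption is not used here), but the argument is the same.
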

\vspace{-2em}
\begin{proof}
    See Appendix \ref{apd:optimalEstimator}
\end{proof}

\begin{remark}
    For the case where the state sequence is i.i.d., the Markov chain $(X^{i-1},Z^{i-1})-(X_i,Z_i)-S_i$ holds.
    Consequently, $P_{S_i\vert X^i,Z^i}(s_i\vert x^i,z^i)=P_{S_i\vert X_i,Z_i}(s_i\vert x_i,z_i)$ and the optimal symbol estimate becomes 
    \begin{align}
        g_i^*(X^i&,Z^i)=g_i^*(X_i,Z_i)\nonumber\\
        &=\argmin_{\hat{s}}\int_{\mathcal{S}}P_{S_i\vert X_i,Z_i}(s_i\vert x_i,z_i)d_i(s_i,\hat{s})ds_i
    \end{align}
    and the optimal block estimator of the entire sequence is determined symbol-wise, consistent with \cite{ahmadipour_information-theoretic_2022}:
    \begin{equation}
         g^*(X^n,Z^n)=\left(g_1^*(X_1,Z_1),g_2^*(X_2,Z_2),\dots,g_n^*(X_n,Z_n)\right).
    \end{equation}
\end{remark}

Lemma \ref{lem:optimalEstimator} shows that optimal estimates are generated by minimizing the expected value of a cost function $d_i(s_i,\hat{s})$ over a posterior distribution at each time step.
Given the causal, memoryless, and Markov nature of the model, the posterior distribution can be calculated using standard predict/update Bayesian estimation equations \cite{sarkka2023bayesian} with a minor variation to include the channel inputs $X^i$ in conjunction with channel measurements $Z^i$.

Let $c(x^n)\triangleq E[d_{0,n}(S^n,g^*(X^n,Z^n))\big\vert X^n]$ be the \textit{sensing cost} for a sequence of channel inputs $X^n$.
Define the set of cost constrained input sequences as
\begin{align}
    \overrightarrow{\mathcal{P}}_D^{(n)}&=
    \left\{P_{X^n}: E[d_{0,n}(S^n,g^*(X^n,Z^n))]\leq D\right\}\nonumber\\
    &=\left\{P_{X^n}: E_{X^n}\left[E[d_{0,n}(S^n,g^*(X^n,Z^n))\big\vert X^n]\right]\leq D\right\}\nonumber\\
    &=\left\{P_{X^n}: \sum_{x^n}P_{X^n}(x^n)c(x^n)\leq D\right\}.\label{eqn:sensingCost}
\end{align}
Distributions in $\overrightarrow{\mathcal{P}}_D^{(n)}$ satisfy the causality constraints of the system via the optimal estimator $g^*(X^n,Z^n)$ in the sensing cost definition. 

This section concludes by characterizing the open loop capacity region in Theorem \ref{thm:generalCapacity}, which makes use of the results of Lemma \ref{lem:optimalEstimator}.
\begin{theorem}\label{thm:generalCapacity}
    Given the model definition in Figure \ref{fig:systemModel}, the optimal estimator from Lemma \ref{lem:optimalEstimator}, and an open loop encoder, the capacity-distortion tradeoff function of the state dependent memoryless channel $P_{YZ\vert SX}$ with Markov state $P_{S_i\vert S_{i-1}}$ is
    \begin{equation}
        C^{(\textnormal{open})}(D)=\lim_{n\to\infty}\max_{P_{X^n}\in\overrightarrow{\mathcal{P}}_D^{(n)}} \frac{1}{n}\sum_{i=1}^n I(X_i;Y_i\vert S_i).
    \end{equation}
\end{theorem}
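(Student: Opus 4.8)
The plan is to prove the theorem as a matching pair of converse and achievability bounds, writing the capacity as the limit of the finite-blocklength optimization $C_n(D)\triangleq\max_{P_{X^n}\in\overrightarrow{\mathcal P}_D^{(n)}}\frac1n\sum_{i=1}^n I(X_i;Y_i\vert S_i)$. Two structural facts drive both directions: because the decoder is handed the full state sequence $S^n$ and the message is independent of $S^n$, the relevant communication quantity is $I(\,\cdot\,;Y^n\vert S^n)$; and because the encoder is open loop, $X^n$ is a function of $M$ alone, so $M-X^n-(Y^n,S^n)$ forms a Markov chain given the codebook.

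\emph{Converse.} First I would take any sequence of $(M,n)$ codes with $P_c^{(n)}\to 0$ and $\Delta^{(n)}\le D+\varepsilon$ and show the induced input distribution lies in the cost-constrained set. Since Lemma \ref{lem:optimalEstimator} exhibits the distortion-minimizing estimator, the distortion achieved by any admissible estimator is at least $E[d_{0,n}(S^n,g^*(X^n,Z^n))]$; hence $E[d_{0,n}(S^n,g^*(X^n,Z^n))]\le\Delta^{(n)}\le D+\varepsilon$, so $P_{X^n}\in\overrightarrow{\mathcal P}_{D+\varepsilon}^{(n)}$. For the rate, Fano's inequality together with $H(M)=H(M\vert S^n)=\log M$ (message independent of state) gives $nR\le I(M;Y^n\vert S^n)+n\delta_n$ with $\delta_n\to 0$. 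The open-loop data-processing inequality yields $I(M;Y^n\vert S^n)\le I(X^n;Y^n\vert S^n)$, and memorylessness of $P_{Y\vert XS}$ lets me single-letterize, $I(X^n;Y^n\vert S^n)=\sum_i[H(Y_i\vert S^n,Y^{i-1})-H(Y_i\vert X_i,S_i)]\le\sum_i I(X_i;Y_i\vert S_i)$, using $H(Y_i\vert X^n,S^n,Y^{i-1})=H(Y_i\vert X_i,S_i)$ and that conditioning reduces entropy. Combining, $R\le C_n(D+\varepsilon)+\delta_n$; letting $n\to\infty$ and $\varepsilon\to 0$ and invoking continuity of $C_n$ in $D$ (guaranteed by $D_{\mathrm{max}}<\infty$) gives $R\le\lim_n C_n(D)$.

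\emph{Achievability.} I would fix a blocklength $n$ and a near-optimal $P_{X^n}\in\overrightarrow{\mathcal P}_D^{(n)}$, treat each length-$n$ block as a super-symbol, and transmit $B$ of them, drawing the $B$ blocks of each codeword i.i.d.\ from $P_{X^n}$. Because the channel is memoryless and the inputs are independent across blocks while the decoder knows every state, joint-typicality decoding over the super-alphabet succeeds at any per-use rate below $\frac1n I(X^n;Y^n\vert S^n)$, and the per-block sensing cost concentrates about $E_{P_{X^n}}[c(X^n)]\le D$, so the distortion constraint is met with high probability; expurgation then yields a deterministic, hence open-loop, codebook meeting $P_c^{(n)}$ and $\Delta^{(n)}\le D+\varepsilon$.

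\emph{Main obstacle.} The delicate part is reconciling the two directions, since achievability naturally produces $\frac1n I(X^n;Y^n\vert S^n)$ whereas the converse produces the larger per-letter sum $\frac1n\sum_i I(X_i;Y_i\vert S_i)$; equality in the single-letterization holds precisely for inputs independent across time, for which $Y_i$ is conditionally independent of $(Y^{i-1},S^{n})$ given $S_i$. I therefore expect the crux to be a two-part limiting argument: first, establishing that $\lim_n C_n(D)$ exists via a (super/sub)additivity argument in the spirit of Fekete, which requires splicing optimizers of two horizons while controlling the boundary effects introduced by the Markov continuation of $S^n$ across the join and by the $1/(n+1)$ normalization that charges the initial state $s_0$; and second, showing that this limit is unchanged if the inner maximization is restricted to product inputs, equivalently that the excess-correlation term $\frac1n\big(\sum_i H(Y_i\vert S_i)-H(Y^n\vert S^n)\big)$ can be driven to zero along an optimizing sequence without violating the cost constraint. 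Because $\sum_i I(X_i;Y_i\vert S_i)$ depends on $P_{X^n}$ only through its one-dimensional marginals while the cost is a linear functional of the full joint, the key lemma is that the best attainable marginals remain feasible, up to $o(1)$ distortion, under an independent coupling, so that the product input used in achievability matches the converse bound in the limit.
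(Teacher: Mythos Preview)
Your converse is essentially the paper's: Fano on $H(M\vert Y^n,S^n)$, data processing $I(M;Y^n\vert S^n)\le I(X^n;Y^n\vert S^n)$, and the same single-letterization via $H(Y_i\vert X^n,S^n,Y^{i-1})=H(Y_i\vert X_i,S_i)$ plus conditioning-reduces-entropy. No material difference there.

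The achievability routes diverge. You block into length-$n$ super-symbols, draw $B$ of them i.i.d.\ from $P_{X^n}$, and appeal to joint typicality over the super-alphabet. The paper does \emph{not} block: it generates length-$n$ codewords from a product law, uses threshold (information-spectrum) decoding on the information density $\sum_i\log\frac{P_{Y_i\vert X_iS_i}}{P_{Y_i\vert S_i}}$, and invokes a Hoeffding-type inequality for functionals of a homogeneous Markov chain to get concentration about $\frac1n\sum_i I(X_i;Y_i\vert S_i)$. This directly absorbs the non-i.i.d.\ state into the concentration step, so the ``boundary effects introduced by the Markov continuation of $S^n$'' that you flag as an obstacle simply never arise. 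Your super-symbol scheme is not wrong in spirit, but the super-channel you create is \emph{not} memoryless (the state at the start of block $b+1$ depends on the end of block $b$), so the plain joint-typicality lemma does not apply; you would need an ergodic-decomposition or mixing argument that effectively reproduces what the paper gets for free from the Markov Hoeffding bound.

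On your ``main obstacle'' (the gap between $\frac1n I(X^n;Y^n\vert S^n)$ and $\frac1n\sum_i I(X_i;Y_i\vert S_i)$, and whether product inputs attain the constrained maximum): you have put your finger on a real subtlety. The paper does not resolve it via any Fekete/independent-coupling lemma of the kind you sketch; it instead fixes a single-letter $P_X$ ``that achieves $C(D/(1+\delta))$'', so that the factorization $P_{Y^n\vert S^n}=\prod_i P_{Y_i\vert S_i}$ holds and the information density already equals the per-letter sum. In other words, the paper's achievability implicitly takes for granted exactly the reduction you isolate as the crux --- that the maximizer in $\overrightarrow{\mathcal P}_D^{(n)}$ can be taken of product form without loss in the limit --- rather than proving it. So your diagnosis is sharper than the paper's on this point, even though your proposed mechanism to close the gap is only sketched.
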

\begin{proof}
    See Appendix \ref{apd:capacity}
\end{proof}

Theorem \ref{thm:generalCapacity} presents a fairly abstract formulation for the JCAS capacity region, which makes it difficult to draw specific conclusions or insights.
To illustrate the behavior of these results, we specialize the model to a more concrete example.

\section{Beam Pointing Example}\label{sec:beamPointingExample}
We now specialize the results of Section \ref{sec:mainResults} to the beam pointing problem shown in Figure \ref{fig:beamPointingProblemSetup}. 
A transmitter wishes to communicate with a fixed receiver while simultaneously tracking a separate mobile target.
The receiver and mobile target are modeled as separate entities to highlight the tradeoff between sensing and communication operations. 
The beam pointing problem specializes the general model of Section \ref{sec:mainResults} using the following assumptions.
\begin{enumerate}[align=left, label=\textbf{Assumption \arabic*},leftmargin=0.7em]
    \setlength{\itemindent}{0em}
    \item\label{itm:bpDef1} Each channel input consists of a pair $(X,\Gamma)$.
    $X$ carries the encoded information, while $\Gamma$ captures the beam pointing strategy chosen by the transmitter.
    Said differently, the choice of $X$ does not encode beam direction or beam width.
    The quality of the transmission from a beam pointing perspective is determined only by $\Gamma$.
    The channel distribution under this assumption becomes $P_{YZ|X\Gamma S}$.
    Realizations of $(X,\Gamma)$ are denoted $(x_i,\gamma_i)$.
    \item\label{itm:bpDef2} The receiver and target are separate entities.
    Furthermore, the channel can be factored as $P_{YZ|X\Gamma S}=P_{Y|X\Gamma }P_{Z|X\Gamma S}$. 
    $P_{Y|X\Gamma}$ represents the communication channel, which does not depend on the state $S$, only on the choice of $\Gamma$.
    $P_{Z|X\Gamma S}$ represents the measurement channel, which does depend on the state of the mobile target.
    \item\label{itm:bpDef3} The state evolves according to a linear Gauss-Markov model.
    \item\label{itm:bpDef4} Consistent with Theorem \ref{thm:generalCapacity}, we consider open loop coding strategies for the channel inputs.
    \item\label{itm:bpDef5} The transmitter possesses an initial state estimate such that $d_0(s_0,\hat{s}_0)<D$.
    The sensing objective is to then only \textit{track} the state of the target rather than \textit{acquire and track}.
\end{enumerate}
Note that this system is primarily intended to build intuition and consequently ignores some finer details of realistic beamforming design such as side-lobe behaviors, beam switching times, blockages/multipath, etc.

Among the possible beam pointing strategies, we choose to present two beam pointing strategies to illustrate and explore how the general results may be specialized: beam switching and multi-beam.
The beam switching strategy mimics systems capable of transmitting only one beam at a time.
The beam must either commit to communication operation or sensing operation, but not both.
Such a strategy results in a clear tradeoff between time spent transmitting to the receiver and time sensing. 
Systems capable of transmitting multiple beams simultaneously have the option to sense and communicate at the same time.
The tradeoff in a multi-beam approach results from limited system resources such as power, bandwidth, and spectrum constraints.

\vspace{-0.5em}
\begin{figure}[htb!]
    \centering
    \def\svgwidth{0.8\linewidth}
    {\footnotesize
\begingroup%
  \makeatletter%
  \providecommand\color[2][]{%
    \errmessage{(Inkscape) Color is used for the text in Inkscape, but the package 'color.sty' is not loaded}%
    \renewcommand\color[2][]{}%
  }%
  \providecommand\transparent[1]{%
    \errmessage{(Inkscape) Transparency is used (non-zero) for the text in Inkscape, but the package 'transparent.sty' is not loaded}%
    \renewcommand\transparent[1]{}%
  }%
  \providecommand\rotatebox[2]{#2}%
  \newcommand*\fsize{\dimexpr\f@size pt\relax}%
  \newcommand*\lineheight[1]{\fontsize{\fsize}{#1\fsize}\selectfont}%
  \ifx\svgwidth\undefined%
    \setlength{\unitlength}{205.18942804bp}%
    \ifx\svgscale\undefined%
      \relax%
    \else%
      \setlength{\unitlength}{\unitlength * \real{\svgscale}}%
    \fi%
  \else%
    \setlength{\unitlength}{\svgwidth}%
  \fi%
  \global\let\svgwidth\undefined%
  \global\let\svgscale\undefined%
  \makeatother%
  \begin{picture}(1,0.42342073)%
    \lineheight{1}%
    \setlength\tabcolsep{0pt}%
    \put(0,0){\includegraphics[width=\unitlength,page=1]{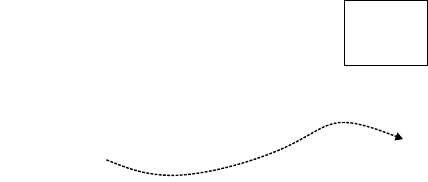}}%
    \put(0.85875955,0.35653688){\color[rgb]{0,0,0}\makebox(0,0)[lt]{\lineheight{1.25}\smash{\begin{tabular}[t]{c}Fixed\\Rx\end{tabular}}}}%
    \put(0,0){\includegraphics[width=\unitlength,page=2]{BeamPointingSetup.pdf}}%
    \put(0.05490049,0.35653688){\color[rgb]{0,0,0}\makebox(0,0)[lt]{\lineheight{1.25}\smash{\begin{tabular}[t]{c}Fixed\\Tx\end{tabular}}}}%
    \put(0,0){\includegraphics[width=\unitlength,page=3]{BeamPointingSetup.pdf}}%
    \put(0.59409545,0.07039434){\color[rgb]{0,0,0}\makebox(0,0)[t]{\lineheight{1.25}\smash{\begin{tabular}[t]{c}Mobile\\Target\end{tabular}}}}%
  \end{picture}%
\endgroup%
}
    \caption{Illustration of a beam pointing JCAS system with a fixed receiver and separate mobile target.}
    \label{fig:beamPointingProblemSetup}
\end{figure}
\vspace{-1.5em}

\subsection{Mobility Model and Kalman Filtering}
We assume that the state evolves according to a linear Gauss-Markov model of the form,
\begin{align}
    s_i&=As_{i-1}+w_{i-1}\\
    z_i&=Cs_i+v_i 
\end{align}
where $A\in\R^{m\times m}$, $w_{i-1}\in\R^m$ is the process noise distributed as $w_{i-1}\sim\mathcal{N}(0,Q)$, $C\in\R^{k\times m}$, $v_i\in\R^{k\times k}$ is the measurement noise distributed as $v_i\sim\mathcal{N}(0,\gamma_iR)$.
We assume that $(A,C)$ is detectable and that $(A,Q^{1/2})$ is controllable.

Consistent with \ref{itm:bpDef1}, $\gamma_i$ is a channel input that encodes the beam pointing strategy of the transmitter.
Operationally, $\gamma_i$ acts as a scalar gain to the measurement noise covariance matrix, capturing the fact that the transmitter's choice of beam strategy (e.g. beam width and beam direction) influences the quality of the state measurement and received message.
One can think of $\gamma_i$ as a description of how much power is being delivered to either the receiver or the target.
The exact behavior of $\gamma_i$ is determined by the chosen beam pointing strategy, defined in subsequent sections.

\begin{remark}
    We deliberately encode the tradeoff as varying measurement noise strength rather than varying the signal power stay consistent with prior work.
    Both approaches can be made equivalent as performance for both the sensing and communication operations depends on the signal to noise ratio (SNR) rather than power or noise alone.
\end{remark}

By \ref{itm:bpDef3}, the optimal Bayesian filter for the proposed beam pointing problem is a Kalman filter.
The distortion metric is the mean square error.
We use $\hat{s}_{j|k}$ to denote the optimal state estimate at time $j$ given available knowledge up to and including time $k$ for $j\geq k$.
Similarly, we denote the estimation error covariance at time $j$ given time $k$ as $P_{j|k}$.

The Kalman filter is initialized with an estimate $\hat{s}_0$ and covariance $P_0$ arising from \ref{itm:bpDef5}.
The prediction and update equations are well known with a slight change to the Kalman gain $K_i$ to include $\gamma_i$.
\begin{align}
    K_{i+1}&=P_{i+1|i}C^T(CP_{i+1|i}C^T+\gamma_{i+1}R)^{-1}.\label{eqn:IKF_kalmanGain}    
\end{align}
The single step covariance update is
\begin{equation}\label{eqn:discreteARE}
    P_{i+1}=AP_{i}A^T+Q-AP_{i}C^T(CP_{i}C^T+\gamma_iR)^{-1}CP_{i}A^T,
\end{equation}
where we use the simplifying notation $P_{i|i-1}\triangleq P_i$.
In classical Kalman filter analysis, a steady state covariance is found by finding a fixed point of (\ref{eqn:discreteARE}) as $i\to\infty$.
For the beam switching analysis, we treat $\gamma_i$ as a random variable, meaning (\ref{eqn:discreteARE}) is stochastic.
Consequently, we analyze the behavior of $E[P_i]$ as $i\to\infty$.

\subsection{Beam Switching Strategy}\label{sec:beamSwitchingStrategy}
In the beam switching strategy, the system can transmit only one beam and must switch between sensing and communication operations.
To model this behavior, the input sequence $\{\gamma_i\}$ is defined as a sequence of random variables such that
\begin{equation}\label{eqn:beamSwitchingGamma}
    \gamma_i=
    \begin{cases}
        1 & \text{ with probability } \lambda\\
        \sigma & \text{ with probability } (1-\lambda)
    \end{cases}
\end{equation}
where $\sigma$ is allowed to tend toward infinity.
Operationally, this means that the system is sensing with probability $\lambda$ and communicating with probability $1-\lambda$.

\begin{remark}\label{rmk:openLoopGamma}
    For $\gamma_i$ defined by (\ref{eqn:beamSwitchingGamma}), the sensing operation becomes an intermittent Kalman Filter, which is studied in further depth by \cite{sinopoli_kalman_2004}.
    A feature of intermittent Kalman filters is that as $\sigma$ tends to infinity, equation (\ref{eqn:discreteARE}) reduces to $P_{i+1}=AP_{i}A^T+Q$, implying that the filter is updating in open loop.
\end{remark}

\begin{lemma}\label{lem:bsExpectedARE}
    Under the beam switching strategy for $\gamma_i$, the expected value $E[P_i]$ is
    \begin{equation}
        E[P_i] = E[\Gamma_{\textnormal{bs}}(P_{i-1},\lambda)]
    \end{equation}
    where
    \begin{equation}\label{eqn:intermittantARE}
        \Gamma_{\textnormal{bs}}(P,\lambda)=APA^T+Q-\lambda APC^T(CPC^T+R)^{-1}CPA^T.
    \end{equation}
\end{lemma}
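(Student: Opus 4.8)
The plan is to collapse the stochastic Riccati recursion (\ref{eqn:discreteARE}) into an expression that is \emph{linear} in a single Bernoulli indicator, and then to exploit the independence of that indicator from the past covariance. Introduce $\theta_{i-1}\triangleq\mathbf{1}\{\gamma_{i-1}=1\}$, which equals $1$ with probability $\lambda$ and $0$ with probability $1-\lambda$, so that $E[\theta_{i-1}]=\lambda$. Reindexing (\ref{eqn:discreteARE}) so that $P_i$ is written in terms of $P_{i-1}$ and $\gamma_{i-1}$, I would first pass to the limit $\sigma\to\infty$ on each sample path: when $\gamma_{i-1}=\sigma$ the factor $(CP_{i-1}C^T+\gamma_{i-1}R)^{-1}\to 0$ (since $R\succ 0$), so the correction term vanishes and the update reduces to the open-loop prediction $P_i=AP_{i-1}A^T+Q$, exactly as recorded in Remark \ref{rmk:openLoopGamma}; when $\gamma_{i-1}=1$ the full update applies. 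The two cases combine into the single recursion
\[
    P_i = AP_{i-1}A^T+Q-\theta_{i-1}\,AP_{i-1}C^T\bigl(CP_{i-1}C^T+R\bigr)^{-1}CP_{i-1}A^T .
\]

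I would then take expectations of this recursion. The key observation is that, under the open-loop assumption \ref{itm:bpDef4}, the sequence $\{\gamma_i\}$ is drawn without feedback and may be taken i.i.d.; since $P_{i-1}$ is, through the recursion, a deterministic function of the initialization and of the earlier inputs $\gamma_1,\dots,\gamma_{i-2}$ only, the fresh indicator $\theta_{i-1}$ is independent of $P_{i-1}$. Hence the expectation of the product factors, $E[\theta_{i-1}\,AP_{i-1}C^T(CP_{i-1}C^T+R)^{-1}CP_{i-1}A^T]=\lambda\,E[AP_{i-1}C^T(CP_{i-1}C^T+R)^{-1}CP_{i-1}A^T]$, and linearity of expectation gives $E[P_i]=AE[P_{i-1}]A^T+Q-\lambda E[AP_{i-1}C^T(CP_{i-1}C^T+R)^{-1}CP_{i-1}A^T]=E[\Gamma_{\textnormal{bs}}(P_{i-1},\lambda)]$, which is the claim.

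The step I expect to require the most care is the $\sigma\to\infty$ limit and its compatibility with the expectation. Passing the limit pathwise, as above, sidesteps any interchange of limit and integral; what remains is to confirm that the covariances are integrable so that the expectations are well defined, which follows from the standing assumption that the per-block distortion, and hence $E[\text{tr}(P_i)]$, is bounded by $D_{\textnormal{max}}<\infty$ (dominated convergence would equally justify passing the limit through $E[\cdot]$ if one preferred). Conceptually the crux is the independence argument, but once the recursion is written in the single-indicator form above it is immediate; the detectability and controllability hypotheses on $(A,C)$ and $(A,Q^{1/2})$ are needed only downstream to guarantee convergence of $E[P_i]$ as $i\to\infty$, not for the per-step identity asserted here.
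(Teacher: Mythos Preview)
Your proposal is correct and follows essentially the same route as the paper: collapse the stochastic Riccati recursion to a single Bernoulli-indicator form via the $\sigma\to\infty$ limit of Remark~\ref{rmk:openLoopGamma}, then take expectations using the independence of the fresh indicator from $P_{i-1}$ (the paper phrases this last step as the tower property conditioned on $P_{i-1}$, which amounts to the same thing). If anything, your write-up is more careful than the paper's, since you explicitly justify the pathwise limit and the integrability needed for the expectation to be well defined.
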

\begin{proof}
    See Appendix \ref{apd:bsExpectedARE}
\end{proof}

\begin{lemma}\label{lem:boundingExpectedCovariance}
    The cost constrained set $\overrightarrow{\mathcal{P}}_D^{(n)}$ defined in equation (\ref{eqn:sensingCost}) is
    \begin{equation}
        \overrightarrow{\mathcal{P}}_D^{(n)}=
        \left\{ P_{X^n\Gamma^n}|\textnormal{tr}(E[P_n])\leq D\right\}
    \end{equation}
    and satisfies
    \begin{align}\label{eqn:costConstraintInclusions}
        \left\{ P_{X^n\Gamma^n}|V_n\leq D\right\}
        \subseteq\overrightarrow{\mathcal{P}}_D^{(n)}\subseteq
        \left\{ P_{X^n\Gamma^n}|S_n\leq D\right\}
    \end{align}
    where $\lim_{n\to\infty}S_n=\bar{S}$ and $\lim_{n\to\infty}V_n=\bar{V}$ are solutions to the algebraic equations $\bar{S}=(1-\lambda)A\bar{S}A^T+Q$ and $\bar{V}=\Gamma_{\textnormal{bs}}(\bar{V},\lambda)$, respectively.
    Furthermore, 
    \begin{equation}
        \bar{S}\leq\lim_{n\to\infty}E[P_n]\leq\bar{V}.
    \end{equation}
\end{lemma}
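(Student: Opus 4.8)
The plan is to specialize the general sensing cost of \eqref{eqn:sensingCost} to the Gauss--Markov/MSE setting, and then to sandwich the random expected covariance $E[P_n]$ between two \emph{deterministic} matrix recursions, after which the set inclusions follow by taking traces and the asymptotic bounds follow by taking limits. For the first equality, I would note that with mean--square distortion the optimal estimator of Lemma~\ref{lem:optimalEstimator} is the conditional mean, whose per--letter cost is the trace of the associated Kalman covariance. A key structural fact is that the Kalman covariances depend on the inputs only through the random gain sequence $\{\gamma_i\}$ and not on the measurement realizations, so the sensing cost $c(x^n,\gamma^n)$ is deterministic given $\gamma^n$ and the block cost equals $\frac{1}{n+1}\sum_{i=0}^{n}\textnormal{tr}(E[P_i])$. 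Since $E[P_i]$ is governed by the recursion of Lemma~\ref{lem:bsExpectedARE} and converges to a steady state, the Ces\`aro average is controlled by its terminal term, which gives $\overrightarrow{\mathcal{P}}_D^{(n)}=\{P_{X^n\Gamma^n}:\textnormal{tr}(E[P_n])\le D\}$; throughout I would read ``$V_n\le D$'' and ``$S_n\le D$'' as the scalar trace constraints $\textnormal{tr}(V_n)\le D$ and $\textnormal{tr}(S_n)\le D$.

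The technical heart is establishing the one--step comparison inequalities for the map $\Gamma_{\textnormal{bs}}$. Applying the matrix inversion lemma to \eqref{eqn:intermittantARE} I would rewrite it as
\begin{equation}
    \Gamma_{\textnormal{bs}}(P,\lambda)=A\big[(1-\lambda)P+\lambda(P^{-1}+C^{T}R^{-1}C)^{-1}\big]A^{T}+Q.
\end{equation}
From this form two properties are immediate, both relying on the fact that $P\mapsto(P^{-1}+M)^{-1}$ is operator concave and monotone in the positive semidefinite (PSD) order: first, $\Gamma_{\textnormal{bs}}(\cdot,\lambda)$ is concave and monotone nondecreasing in $P$; second, because the term $\lambda(P^{-1}+C^{T}R^{-1}C)^{-1}$ is PSD, one has the domination $\Gamma_{\textnormal{bs}}(P,\lambda)\succeq(1-\lambda)APA^{T}+Q$.

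I would then define the deterministic sequences $V_n=\Gamma_{\textnormal{bs}}(V_{n-1},\lambda)$ and $S_n=(1-\lambda)AS_{n-1}A^{T}+Q$, both initialized at $V_0=S_0=P_0$, and prove $S_n\preceq E[P_n]\preceq V_n$ by induction. The upper bound chains Jensen's inequality (via concavity) with monotonicity, using Lemma~\ref{lem:bsExpectedARE}: $E[P_n]=E[\Gamma_{\textnormal{bs}}(P_{n-1},\lambda)]\preceq\Gamma_{\textnormal{bs}}(E[P_{n-1}],\lambda)\preceq\Gamma_{\textnormal{bs}}(V_{n-1},\lambda)=V_n$. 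The lower bound uses the domination together with linearity of expectation and monotonicity of the affine map: $E[P_n]\succeq(1-\lambda)A\,E[P_{n-1}]\,A^{T}+Q\succeq S_n$. Taking traces turns the PSD sandwich into $\textnormal{tr}(S_n)\le\textnormal{tr}(E[P_n])\le\textnormal{tr}(V_n)$, which yields exactly the two inclusions in \eqref{eqn:costConstraintInclusions}.

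Finally I would pass to the limit $n\to\infty$. The sequence $S_n$ is a linear Lyapunov recursion converging to the fixed point $\bar S=(1-\lambda)A\bar S A^{T}+Q$, and $V_n$ is the modified (intermittent) Riccati recursion whose convergence to $\bar V=\Gamma_{\textnormal{bs}}(\bar V,\lambda)$ is the content of the classical analysis in \cite{sinopoli_kalman_2004}; passing the PSD bounds to the limit gives $\bar S\preceq\lim_{n\to\infty}E[P_n]\preceq\bar V$. I expect the main obstacle to be precisely this convergence step: the deterministic bounds converge to the stated algebraic fixed points only under the detectability of $(A,C)$ and controllability of $(A,Q^{1/2})$ \emph{and} for $\lambda$ above the critical probability of \cite{sinopoli_kalman_2004} (equivalently $\rho(\sqrt{1-\lambda}\,A)<1$ for the $S_n$ recursion); below this threshold the bounds diverge, so the existence of finite $\bar S,\bar V$ must be argued carefully rather than assumed. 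The operator concavity/monotonicity of $\Gamma_{\textnormal{bs}}$ enabling the Jensen step is the other ingredient that must be invoked cleanly.
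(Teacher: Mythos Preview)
Your proposal is correct and follows essentially the same route as the paper: identify the MSE distortion with $\textnormal{tr}(E[P_n])$ via linearity of trace and expectation, then sandwich $E[P_n]$ between the deterministic recursions $S_n$ and $V_n$ and read off the inclusions. The only difference is that the paper simply invokes Theorem~4 of \cite{sinopoli_kalman_2004} for the inequality $S_n\preceq E[P_n]\preceq V_n$, whereas you reprove it from scratch via the matrix-inversion-lemma rewrite of $\Gamma_{\textnormal{bs}}$, operator concavity/monotonicity, and Jensen's inequality---this is exactly the argument in \cite{sinopoli_kalman_2004}, so your version is a self-contained unpacking of the cited result rather than a different approach. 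Your remarks on the critical-$\lambda$ threshold for convergence of $\bar S,\bar V$ are accurate and in fact more careful than the paper's own proof, which leaves that caveat implicit.
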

\begin{proof}
    See Appendix \ref{apn:boundingExpectedCovariance}
\end{proof}

The only parameter that the encoder controls to affect $\text{tr}(E[P_n])$ is the probability $\lambda$ of collecting a measurement.
It would be analytically convenient to write the cost constraint in terms of the feasible $\lambda$.
However, it has not been proved that there generally exists some $\hat{\lambda}$ such that
$\lambda\geq\hat{\lambda}\iff\text{tr}(E[P_n])\leq D$ (see \cite{sinopoli_kalman_2004} Theorem 2).
In contrast, by continuity and monotonicity of the equations for $\bar{S}$ and $\bar{V}$, 
\begin{align}
    &\exists\lambda_S\text{ s.t. }\lambda\geq\lambda_S\iff\bar{S}\leq D\\
    &\exists\lambda_V\text{ s.t. }\lambda\geq\lambda_V\iff\bar{V}\leq D
\end{align}
where $\lambda_S\leq\lambda_V$.
Since $\bar{S}$ and $\bar{V}$ bound $\lim_{n\to\infty}\text{tr}(E[P_n])$, we can consider an inner and outer cost constrained set in terms of $\bar{S}$, $\bar{V}$ and feasible $\lambda$.

Let the outer cost constrained set $\mathcal{P}_{\Lambda_S}$ and inner cost constrained set $\mathcal{P}_{\Lambda_V}$ be described by
\begin{equation}
    \mathcal{P}_{\Lambda_S}(D)=\left(\left\{ \lambda|\bar{S}\leq D\right\}\right),
    \mathcal{P}_{\Lambda_V}(D)=\left(\left\{ \lambda|\bar{V}\leq D\right\}\right)
\end{equation}
where $\mathcal{P}_{\Lambda_V}\subseteq\mathcal{P}_{\Lambda_S}$.
These sets describe what can intuitively be thought of as a switching cost.
The following theorem provides inner and outer bounds for the JCAS rate distortion function for the beam switching strategy.
These bounds are not necessarily tight since $\lambda$ does not directly tune the value of $\lim_{n\to\infty}\text{tr}(E[P_n])$.
On the other hand, $\lambda$ does directly tune the values of $\bar{S}$ and $\bar{V}$, which then bound the expected error.

\begin{theorem}\label{thm:beamSwitchingCapacity}
    Given the beam switching strategy, the rate-distortion tradeoff function for the open loop beam pointing JCAS system is
    \begin{equation}
        \max_{
        \begin{aligned}
            {\scriptstyle P_{X}\in\mathcal{P}_X}\\[-0.5em]
            {\scriptstyle \lambda\in\mathcal{P}_{\Lambda_V}}
        \end{aligned}
        }(1-\lambda)I(X;Y)
        \leq C_{bs}(D)\leq
        \max_{
        \begin{aligned}
            {\scriptstyle P_{X}\in\mathcal{P}_X}\\[-0.5em]
            {\scriptstyle \lambda\in\mathcal{P}_{\Lambda_S}}
        \end{aligned}
        }(1-\lambda)I(X;Y).
    \end{equation}
\end{theorem}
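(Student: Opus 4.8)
The plan is to obtain both bounds from the general open-loop capacity of Theorem~\ref{thm:generalCapacity}, specialized to the beam-pointing channel of Assumptions~\ref{itm:bpDef1}--\ref{itm:bpDef5}. Two reductions do the work: first, collapsing the per-letter mutual information to the communication term $(1-\lambda)I(X;Y)$; and second, replacing the exact sensing-cost set $\overrightarrow{\mathcal{P}}_D^{(n)}$ by the inner and outer feasible-$\lambda$ sets $\mathcal{P}_{\Lambda_V}$ and $\mathcal{P}_{\Lambda_S}$ furnished by Lemma~\ref{lem:boundingExpectedCovariance}. The first reduction shapes the objective, the second shapes the constraint, and the nesting $\mathcal{P}_{\Lambda_V}\subseteq\mathcal{P}_{\Lambda_S}$ combined with monotonicity of $1-\lambda$ yields the sandwich.

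I would first specialize the objective. Under Assumption~\ref{itm:bpDef1} the channel input is the pair $(X_i,\Gamma_i)$; since $\Gamma$ is a message-independent beam strategy, I treat it as a time-sharing variable known to the decoder, so the relevant per-letter quantity in Theorem~\ref{thm:generalCapacity} is $I(X_i;Y_i\vert S_i,\Gamma_i)$. Assumption~\ref{itm:bpDef2} makes the communication channel $P_{Y\vert X\Gamma}$ independent of $S$, giving the Markov chain $S_i-(X_i,\Gamma_i)-Y_i$, while the open-loop coding of Assumption~\ref{itm:bpDef4} makes the inputs independent of the state; together these remove the $S_i$-conditioning, leaving $I(X_i;Y_i\vert\Gamma_i)$. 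Under the switching law~\eqref{eqn:beamSwitchingGamma} the sensing slot $\gamma_i=1$ routes all power to the target, so $I(X;Y\vert\Gamma=1)=0$, whereas the communication slot $\gamma_i=\sigma\to\infty$ recovers the full communication channel with $I(X;Y\vert\Gamma=\sigma)=I(X;Y)$; averaging over the Bernoulli$(\lambda)$ law gives $I(X_i;Y_i\vert\Gamma_i)=(1-\lambda)I(X;Y)$. Because $I(X;Y)$ depends only on $P_X$ and the state-independent communication channel, while the Kalman distortion depends only on the $\Gamma$-sequence (the measurements $z_i=Cs_i+v_i$ do not involve $X$), the maximization separates into $\max_{P_X\in\mathcal{P}_X}I(X;Y)$ times a maximization over feasible $\lambda$.

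Next I would specialize the constraint. Lemma~\ref{lem:boundingExpectedCovariance} identifies $\overrightarrow{\mathcal{P}}_D^{(n)}$ with $\{\textnormal{tr}(E[P_n])\leq D\}$ and supplies the covariance sandwich $\bar{S}\leq\lim_{n\to\infty}E[P_n]\leq\bar{V}$. Taking traces, the condition $\lambda\in\mathcal{P}_{\Lambda_V}$ (that is, $\bar{V}\leq D$) is \emph{sufficient} for $\textnormal{tr}(E[P_n])\leq D$, hence produces genuinely feasible distributions and an achievable rate; conversely any feasible $\lambda$ satisfies $\textnormal{tr}(E[P_n])\leq D$ and therefore forces $\bar{S}\leq D$, i.e. $\lambda\in\mathcal{P}_{\Lambda_S}$. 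This establishes the feasibility nesting $\mathcal{P}_{\Lambda_V}\subseteq\{\text{feasible }\lambda\}\subseteq\mathcal{P}_{\Lambda_S}$. Since $\bar{V}$ and $\bar{S}$ are monotone decreasing in $\lambda$, both sets are upper intervals, and because $(1-\lambda)I(X;Y)$ is decreasing in $\lambda$, maximizing over the inner set $\mathcal{P}_{\Lambda_V}$ lower-bounds $C_{bs}(D)$ and maximizing over the outer set $\mathcal{P}_{\Lambda_S}$ upper-bounds it, which is precisely the claimed inequality.

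The principal obstacle is exactly the one flagged before the statement: $\lambda$ does not directly determine $\lim_{n\to\infty}\textnormal{tr}(E[P_n])$ (cf.\ \cite{sinopoli_kalman_2004}, Theorem~2), so $\overrightarrow{\mathcal{P}}_D^{(n)}$ cannot be replaced by a single feasible-$\lambda$ set; I must instead carry the two-sided covariance bound through the optimization, and this is why the result is a sandwich rather than an equality. The remaining technical care lies in justifying the $\sigma\to\infty$ limit so that a sensing slot contributes exactly zero communication rate and an open-loop, prediction-only covariance update (Remark~\ref{rmk:openLoopGamma}), and in making the time-sharing treatment of $\Gamma$ rigorous---treating the switching pattern as common randomness revealed to the decoder without loss---so that the reduction to $(1-\lambda)I(X;Y)$ is exact and no spurious $I(\Gamma_i;Y_i)$ term enters the message rate.
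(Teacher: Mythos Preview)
Your proposal is correct and follows essentially the same route as the paper: reduce the per-letter mutual information to $(1-\lambda)I(X;Y)$ by conditioning on $\Gamma$, using Assumption~\ref{itm:bpDef2} to drop the $S$-dependence and the switching law~\eqref{eqn:beamSwitchingGamma} to kill the sensing-slot contribution, then sandwich the feasible-$\lambda$ set between $\mathcal{P}_{\Lambda_V}$ and $\mathcal{P}_{\Lambda_S}$ via Lemma~\ref{lem:boundingExpectedCovariance}. The only stylistic difference is that the paper redoes the Fano converse and a random-coding achievability from scratch rather than invoking Theorem~\ref{thm:generalCapacity} as a black box; your bootstrap is more economical and, in the converse step, arguably cleaner in its handling of $\Gamma$ as common randomness known to the decoder.
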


\begin{proof}
    See Appendix \ref{apd:beamSwitchingCapacity}.
\end{proof}

\subsection{Multi-Beam Strategy}
In a multi-beam strategy, the system can beamform in multiple directions at once, allowing for simultaneous sensing and communication.
Each beam shares system resources, which results in a natural tradeoff between measurement quality and communication rate.

We model a simple open loop strategy in which the power allocated to each beam remains constant over the duration of the codeword.
This is modeled by letting $\gamma_i=\gamma_0$ for $i=1,2,\dots,n$.
$\gamma_0$ is allowed to take values in $[1,\infty)$ where $\gamma_0=1$ implies ``all sensing'' and $\gamma_0=\infty$ implies ``all communicating''.

Let $\Gamma_{\textnormal{mb}}(P,\gamma)$ be the multi-beam analogue of the modified ARE from (\ref{eqn:intermittantARE}).
\begin{equation}\label{eqn:multiBeamARE}
    \Gamma_{\textnormal{mb}}(P,\gamma)=APA^T+Q-APC^T(CPC^T+\gamma R)^{-1}CPA^T
\end{equation}
Since $\gamma_0$ is assumed constant, the steady state covariance is no longer random, and the steady state error of the estimator can be found as
\begin{equation}\label{eqn:multiBeamSteadyStateError}
    \text{tr}(P)=\text{tr}(\Gamma_{\textnormal{mb}}(P,\gamma_0)).
\end{equation}

Given (\ref{eqn:multiBeamSteadyStateError}), the estimation cost constrained set is no longer determined by the information bearing codewords $X^n$.
Rather, satisfaction of the distortion constraint is only determined by the $\gamma_0$ such that $\text{tr}(\Gamma_{mb}(P,\gamma_0))$ remains below the allowed distortion $D$.

\begin{theorem}\label{thm:multiBeamCapacity}
    Let 
    \begin{equation}\label{eqn:mbGammaSet}
        \mathcal{G}(D)=\{\gamma:\textnormal{tr}(\Gamma_{\textnormal{mb}}(P,\gamma_0))\leq D\}.
    \end{equation}
    Under the multi-beam strategy, the rate-distortion tradeoff function for the beam pointing JCAS system is
    \begin{equation}
        C_{\textnormal{mb}}(D) =\max_{
            {\scriptstyle P_{X}\in\mathcal{P}_X,\gamma_0\in\mathcal{G}(D)}
        }I(X;Y|\Gamma=\gamma_0).
    \end{equation}
\end{theorem}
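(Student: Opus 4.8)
The plan is to start from the general open-loop capacity of Theorem \ref{thm:generalCapacity}, specialized to the beam-pointing input pair $(X,\Gamma)$, and to show that the multi-beam structure collapses both the information objective and the cost constraint into the claimed single-letter form. Replacing the channel input $X_i$ in Theorem \ref{thm:generalCapacity} by the pair $(X_i,\Gamma_i)$ gives
\begin{equation*}
    C_{\textnormal{mb}}(D)=\lim_{n\to\infty}\max_{P_{X^n\Gamma^n}\in\overrightarrow{\mathcal{P}}_D^{(n)}}\frac{1}{n}\sum_{i=1}^n I(X_i,\Gamma_i;Y_i\vert S_i).
\end{equation*}
Because the multi-beam strategy fixes $\Gamma_i=\gamma_0$ deterministically for every $i$, the component $\Gamma_i$ carries no randomness, so each summand reduces to $I(X_i;Y_i\vert S_i,\Gamma_i=\gamma_0)$.

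The central step is to remove the conditioning on $S_i$. By Assumption \ref{itm:bpDef2} the channel factors as $P_{YZ\vert X\Gamma S}=P_{Y\vert X\Gamma}P_{Z\vert X\Gamma S}$, so $Y_i$ is conditionally independent of $S_i$ given $(X_i,\Gamma_i)$. Moreover, under the open-loop Assumption \ref{itm:bpDef4} the encoder output $X_i$ is a function of the message alone and is therefore independent of the state sequence; together with $\Gamma_i=\gamma_0$ constant this yields $X_i\perp S_i\mid\Gamma_i=\gamma_0$. Combining the two facts gives $Y_i\perp S_i\mid\Gamma_i=\gamma_0$, equivalently $I(S_i;Y_i\vert\Gamma_i=\gamma_0)=0$, and hence by the chain rule
\begin{equation*}
    I(X_i;Y_i\vert S_i,\Gamma_i=\gamma_0)=I(X_i;Y_i\vert\Gamma_i=\gamma_0).
\end{equation*}

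With the state dependence eliminated, the objective becomes the Cesàro average $\frac{1}{n}\sum_{i=1}^n I(X_i;Y_i\vert\Gamma_i=\gamma_0)$ over the memoryless channel $P_{Y\vert X\Gamma=\gamma_0}$, with no coupling remaining across time steps. A standard single-letterization argument---using concavity of mutual information in the input and the memoryless property to show an i.i.d. input is optimal---collapses the average to the per-letter quantity $I(X;Y\vert\Gamma=\gamma_0)$, so the limit in $n$ is trivial. Finally, I would invoke the multi-beam steady-state relation (\ref{eqn:multiBeamSteadyStateError}): since $\gamma_i=\gamma_0$ is constant, the error covariance is deterministic and converges to the fixed point of $\Gamma_{\textnormal{mb}}(\cdot,\gamma_0)$, so the distortion depends only on $\gamma_0$. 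The constraint $\overrightarrow{\mathcal{P}}_D^{(n)}$ therefore decouples into $\gamma_0\in\mathcal{G}(D)$, and the joint maximization separates into the claimed $\max_{P_X\in\mathcal{P}_X,\gamma_0\in\mathcal{G}(D)}I(X;Y\vert\Gamma=\gamma_0)$.

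The main obstacle is the step removing the $S_i$-conditioning: it is tempting to appeal only to the channel factorization, but $I(S_i;Y_i\vert\Gamma_i=\gamma_0)=0$ genuinely requires \emph{both} Assumption \ref{itm:bpDef2} and the open-loop independence of $X_i$ from $S_i$---without the latter the state could correlate with $Y_i$ through a state-adaptive input even though the channel law itself is state-free. A secondary technical point is justifying that the finite-$n$ distortion constraint passes to $\mathcal{G}(D)$ in the limit, i.e. that the time-averaged $\textnormal{tr}(P_i)$ converges to the deterministic fixed point $\textnormal{tr}(P)$; this follows from detectability of $(A,C)$ and controllability of $(A,Q^{1/2})$, which guarantee a unique stabilizing solution of (\ref{eqn:multiBeamSteadyStateError}) to which the Riccati recursion converges.
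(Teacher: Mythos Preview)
Your proposal is correct and follows essentially the same logical path as the paper: substitute the pair $(X,\Gamma)$ for the channel input, use the deterministic $\Gamma_i=\gamma_0$ to collapse the conditioning, invoke Assumption~\ref{itm:bpDef2} to drop $S_i$, single-letterize over the memoryless channel $P_{Y\mid X,\Gamma=\gamma_0}$, and reduce the cost constraint to $\gamma_0\in\mathcal{G}(D)$ via the deterministic Riccati limit. The only organizational difference is that the paper re-derives separate converse and achievability arguments from scratch rather than specializing Theorem~\ref{thm:generalCapacity} directly as you do; your explicit remark that dropping $S_i$ requires \emph{both} the channel factorization and the open-loop independence $X_i\perp S_i$ is in fact more careful than the paper's one-line appeal to Assumption~\ref{itm:bpDef2}.
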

\begin{proof}
    See Appendix \ref{apd:multiBeamCapacity}
\end{proof}

\subsection{Numerical Results}
We illustrate the rate distortion regions of the beam pointing strategies using two different state models.
Consider the two following scalar systems. 
\begin{align*}
    \textbf{Unstable system: }&A = -1.15, Q=0.2, C = 1, R=1.5\\
    \textbf{Stable system: }&A = -0.95, Q=0.2, C = 1, R=1.5
\end{align*}

These two systems differ only in $A$, where the first system is unstable, and the second system is stable.

For unstable systems, \cite{sinopoli_kalman_2004} shows that there exists a critical $\lambda>0$, below which the mean square error of an intermittent Kalman filter is no longer guaranteed to converge for all initial values and sequences $\{\gamma_i\}_i$.
This is illustrated in Figure \ref{fig:noiselessChannel}, which presents the bounds of a noiseless communication channel using the beam-switching strategy.

Assuming a noiseless discrete communication channel $P_{Y|X\Gamma}$, the expression in Theorem \ref{thm:beamSwitchingCapacity} becomes a maximization over $(1-\lambda)$ given some distortion constraint.
As Figure \ref{fig:noiselessChannel} shows, no finite allowed distortion can yield a communication rate of $1$ for the unstable system.
Alternatively, it can be said that there exists some communication rate above which there is no longer any guarantee of a finite distortion.
Thus, for the unstable system, some minimum amount of time must be spent sensing for feasible joint operation.
On the other hand, the total lack of state measurements ($\lambda=0$) still results in a finite distortion for the stable system.

We next assume a Gaussian channel and compare the beam switching and multi-beam strategies.
For the multi-beam illustration, the communication SNR and measurement noise are parameterized by $\gamma_0$ to sweep between ``all sensing'' ($\gamma_0=1$) and ``all communication'' ($\gamma_0=\infty$) modes.
The result is a power sharing scheme with the power divided between the two operations.
While such a simulation simplifies the complexities of a full multi-beam system, we believe the results suggest trends that arise from strategic resource allocation.
Given a constant $\gamma_0$, fixed power $P_0$, communication noise $N_0$, and previously defined sensing dynamics, the channel rate and sensing distortion for the numerical simulation are given by,
\begin{equation}
    C_{\textnormal{mb}} = \frac{1}{2}\log(1+\frac{\gamma_0-1}{\gamma_0}\frac{P_0}{N_0}),\hspace{1em}
    D_{\textnormal{mb}} = \text{tr}(\Gamma_{\textnormal{mb}}(P,\gamma_0))
\end{equation}

For the beam switching strategy and Gaussian channel, the rate and distortion equations are parameterized by $\lambda$ and are given by,
\begin{equation}
    C_{\textnormal{bs}} = (1-\lambda)\frac{1}{2}\log(1+\frac{P_0}{N_0}),\hspace{1em}
    \text{tr}(\bar{S}) \leq D_{\textnormal{bs}} \leq \text{tr}(\bar{V})
\end{equation}

\begin{figure}[htb!]
    \centering
    \includegraphics[width=0.49\linewidth]{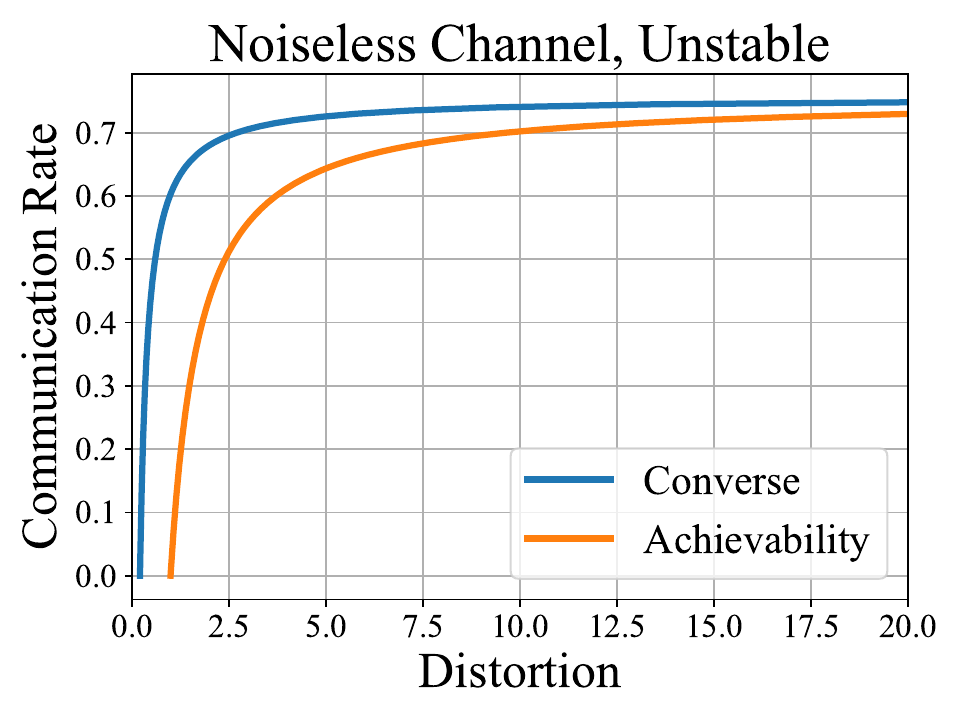}
    \includegraphics[width=0.49\linewidth]{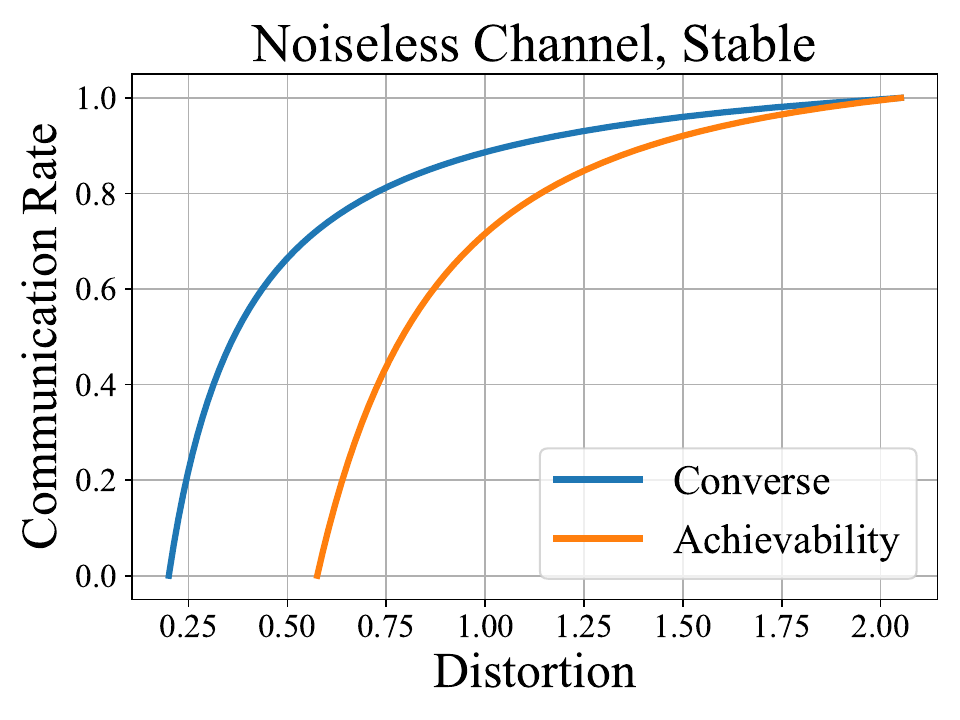}
    \caption{Rate-distortion regions for the unstable (left) and stable (right) system for a noiseless channel.}
    \label{fig:noiselessChannel}
\end{figure}
The rate distortion region for the comparison is computed using a Gaussian channel with an SNR of approximately 1.75 dB.
Results are given in Figure \ref{fig:mbStrategyComparison}.

\begin{figure}[h]
    \centering
    \includegraphics[width=0.49\linewidth]{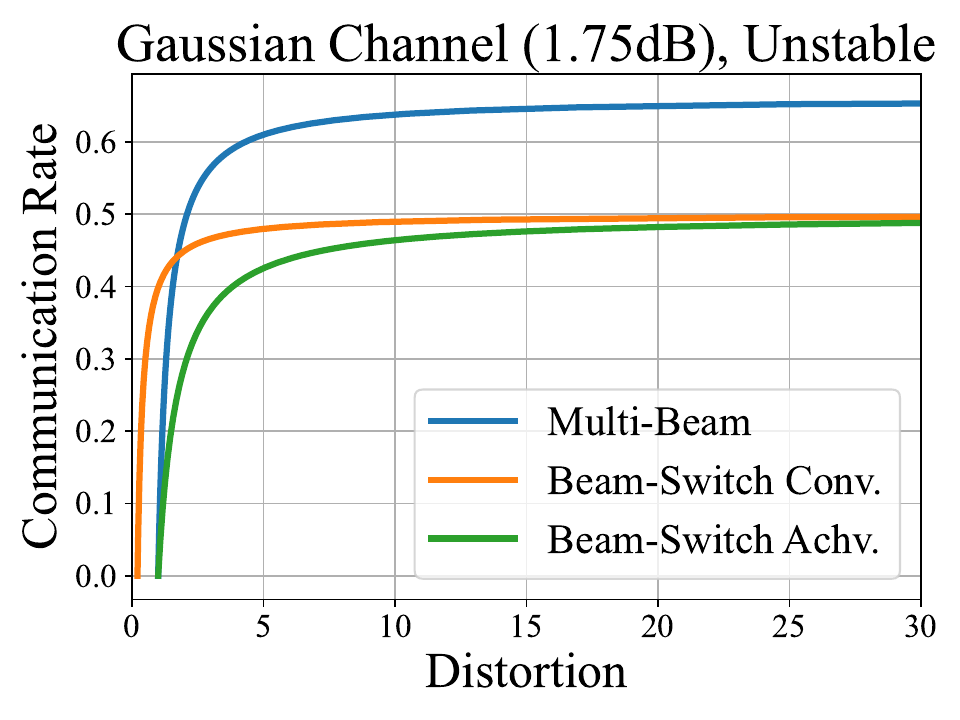}
    \includegraphics[width=0.49\linewidth]{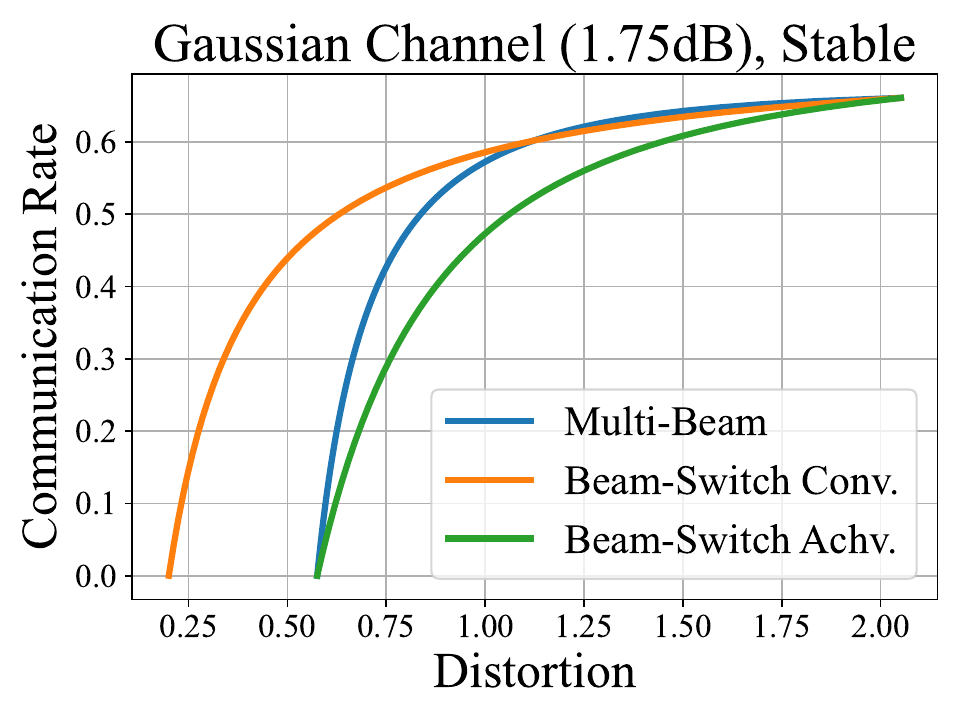}\\
    \includegraphics[width=0.49\linewidth]{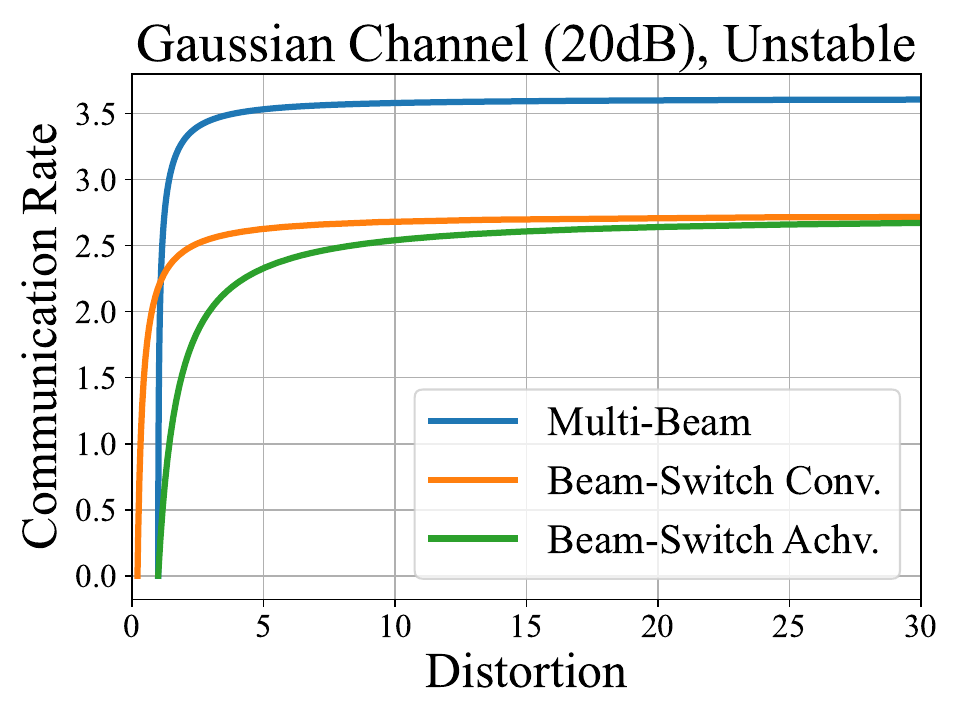}
    \includegraphics[width=0.49\linewidth]{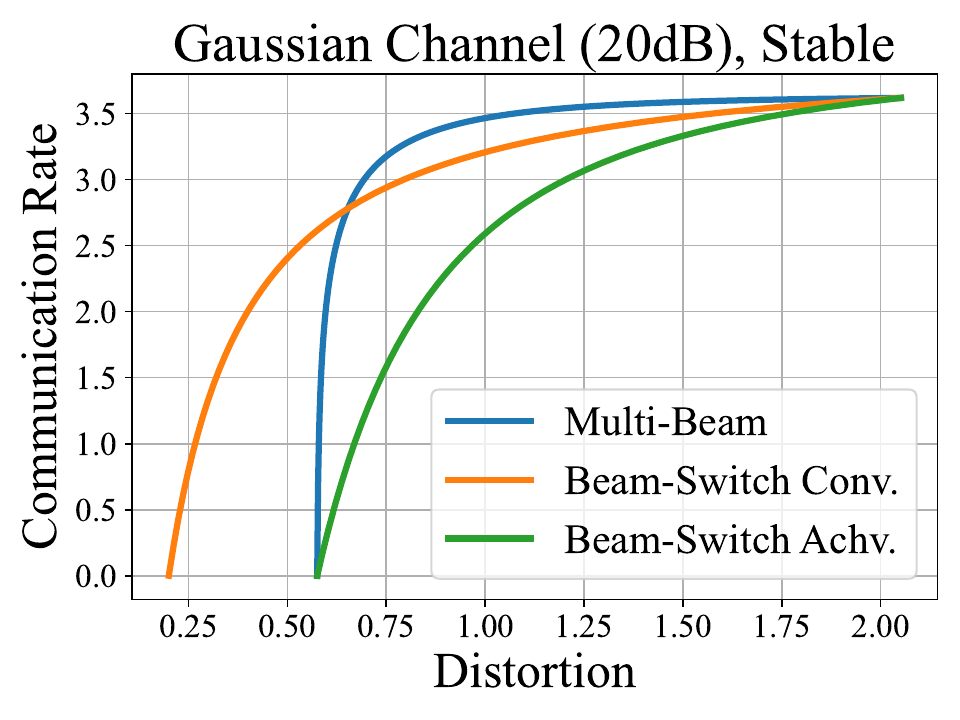}
    \caption{Comparison of strategies over Gaussian channel with SNR of 1.75 dB (top row) and SNR of 20 dB (bottom row) for the unstable (left) and stable (right) system.}
    \label{fig:mbStrategyComparison}
\end{figure}

For both the stable and unstable systems, the multi-beam strategy outperforms the beam-switching achievability bound.
For high communication rates, the multi-beam strategy also outperforms the beam-switching converse bound.
Furthermore, for the unstable system, the rate distortion region of the multi-beam strategy significantly outperforms the beam switching strategy at high communication rates.
This is because of the beam-switching strategy's tendency toward unbounded error for small probability of sensing, $\lambda$.
The multi-beam strategy does exhibit unbounded error for large enough sensing gain $\gamma_0$ but at higher communication rates.

Figure \ref{fig:mbStrategyComparison} also shows the same rate distortion curves comparing the two strategies using an SNR of approximately 20 dB. 
The performance gap between the two strategies widens at high SNR, but the same general features exist.

\bibliographystyle{IEEEtran}
\bibliography{citations}

\appendices

\section{Proof of Lemma \ref{lem:optimalEstimator} (Optimal estimator)}\label{apd:optimalEstimator}
By definition of the per-block distortion in (\ref{eqn:blockDistortionDefinition}), minimizing the average per-block distortion implies minimizing the average per-letter distortion in (\ref{eqn:letterDistortionDefinition}) for each time step $i\in[0;n]$.
As such, we minimize $E[d_i(S_i,\hat{S}_i)]$.
\begin{align}
    E[d_i(S_i,\hat{S}_i)]
    &=E_{X^i,Z^i}[d_i(S_i,\hat{S}_i)|X^i,Z^i]\label{eqn:A1_1}\\
    &=\sum_{x^i,z^i}P_{X^iZ^i}(x^i,z^i)\int_{\mathcal{S},\hat{\mathcal{S}}}P_{S_i,\hat{S}_i\vert X^i,Z^i}(s_i,\hat{s}_i\vert x^i,z^i)d_i(S_i,\hat{S}_i)ds_id\hat{s}_i\\
    &=\sum_{x^i,z^i}P_{X^iZ^i}(x^i,z^i)\int_{\hat{\mathcal{S}}}P_{\hat{S}_i\vert X^i,Z^i}(\hat{s}_i\vert x^i,z^i)\int_{\mathcal{S}}P_{S_i\vert X^i,Z^i}(s_i\vert x^i,z^i)d_i(S_i,\hat{S}_i)ds_id\hat{s}_i\\
    &\geq\sum_{x^i,z^i}P_{X^iZ^i}(x^i,z^i)\min_{\hat{s}_i}\int_{\mathcal{S}}P_{S_i\vert X^i,Z^i}(s_i\vert x^i,z^i)d_i(s_i,\hat{s}_i)ds_i\label{eqn:A1_2}\\
    &=E[d_i(S_i,\hat{S}^*_i)]
\end{align}
where the expectation in (\ref{eqn:A1_1}) is taken over $(X^i,Z^i)$ rather than $(X^n,Z^n)$ to conform to the causality constraint and where $\hat{S}^*_i=g_i^*(X^i,Z^i)$ denotes the optimal symbol estimate that minimizes (\ref{eqn:A1_2}) as,
\begin{equation}
    g_i^*(X^i,Z^i)=\argmin_{\hat{s}}\int_{\mathcal{S}}P_{S_i\vert X^i,Z^i}(s_i\vert x^i,z^i)d_i(s_i,\hat{s})ds_i.
\end{equation}
By forming an optimal estimate at each time step $i$, we form the optimal estimator of the whole sequence as
\begin{equation}
    g^*(X^n,Z^n)=\left(g_1^*(X^1,Z^1),g_2^*(X^2,Z^2),\dots,g_n^*(X^n,Z^n)\right).
\end{equation}

\section{Proof of Theorem \ref{thm:generalCapacity} (General capacity)}\label{apd:capacity}
\subsection{Converse Proof}
    Assume $(R,D)$ is achievable.
    We then prove that $R\leq C^{(\textnormal{open})}(D)$
    Following a standard approach, we obtain,
    \begin{align}
        nR&=H(W)\label{eqn:A2_1}\\
        &\leq I(W;Y^n,S^n)+H(W\vert\hat{W})\label{eqn:A2_2}\\
        &\leq I(W;Y^n\vert S^n) +1+P_c^{(n)}\log\vert\mathcal{M}\vert\label{eqn:A2_2a}\\
        &\leq I(X^n;Y^n\vert S^n)+\varepsilon'\label{eqn:A2_3}
    \end{align}
    where $\varepsilon'=1+P_c^{(n)}\log\vert\mathcal{M}\vert$.
    The mutual information can be single letterized as
    \begin{align}
        I(X^n;Y^n\vert S^n)&=\sum_{i=1}^n I(X^n;Y_i|Y^{i-1}S^n)\label{A2_9}\\
        &=\sum_{i=1}^n H(Y_i|Y^{i-1}S^n)-H(Y_i|X^nY^{i-1}S^n)\label{A2_10}\\
        &\leq\sum_{i=1}^n H(Y_i|S_i)-H(Y_i|X^nY^{i-1}S^n)\label{A2_11}\\
        &=\sum_{i=1}^n H(Y_i|S_i)-H(Y_i|X_iS_i)\label{A2_12}\\
        &=\sum_{i=1}^n I(X_i;Y_i|S_i)\label{A2_13}
    \end{align}
    where (\ref{A2_11}) holds because conditioning cannot increase entropy and (\ref{A2_12}) follows from the Markov chain $Y_i-(X_i,S_i)-(X^{i-1},S^{i-1},Y^{i-1},X_{i+1}^n,S_{i+1}^n)$ 
    
    Therefore,
    \begin{align}
        R&\leq\frac{1}{n}\sum_{i=1}^n I(X_i;Y_i|S_i)+\frac{1}{n}\varepsilon'\label{eqn:A2_4}\\ 
        &\leq\max_{P_{X^n}\in\overrightarrow{\mathcal{P}}_D^{(n)}}\frac{1}{n}\sum_{i=1}^n I(X_i;Y_i|S_i)+\frac{1}{n}\varepsilon'\label{eqn:A2_5}
    \end{align}
    where (\ref{eqn:A2_5}) is a maximization over the cost constrained set $\overrightarrow{\mathcal{P}}_D^{(n)}$ that arose as a consequence of Lemma \ref{lem:optimalEstimator}.
    Taking the limit as $n\to\infty$ and $\varepsilon'\to0$,
    \begin{equation}
        R \leq \lim_{n\to\infty} \max_{P_{X^n}\in\overrightarrow{\mathcal{P}}_D^{(n)}} \frac{1}{n} \sum_{i=1}^n I(X_i;Y_i|S_i)
        \triangleq C^{(\textnormal{open})}(D)\label{eqn:A2_6}.
    \end{equation}

\subsection{Achievability Proof}
    We need to prove that for any $R<C(D)$, $R,D$ is achievable according to the criteria from Definition $\ref{def:achievability}$.
    Consider an open loop coding strategy. 
    Fix $P_X$ and $g^*(X^n,Z^n)$ that achieve $R=C(\frac{D}{1+\delta})$ for some $\delta>0$.

    We use a random coding approach to show achievability.
    For $\alpha>0$, let
    \begin{equation}
        \mathcal{A}_\alpha\triangleq\left\{(x^n,y^n,s^n)\in\mathcal{X}^n\times\mathcal{Y}^n\times\mathcal{S}^n:\log\left( \frac{P_{Y^n|X^nS^n}(y^n|x^n,s^n}{P_{Y^n|S^n}(y^n|s^n)}\right) \right\}.
    \end{equation}
    Assume also that $|\mathcal{Y}|<\infty$ and that $\{S_i\}_{i\geq1}$ forms a homogeneous Markov chain.

    \begin{enumerate}
        \item \textit{Codebook generation:}
            Randomly generate $2^{nR}$ sequences $\{x^n(m)\}_{m=1}^{2^{nR}}$ where $m\in\{1,\dots,2^{nR}\}$ and $x^n\sim\prod_{i=1}^{n}p_X(x_i)$.
            These sequences constitute the codebook $\mathcal{C}$.
            Reveal the codebook to the encoder and decoder.
        \item \textit{Encoding:}
            To send the message $m\in\mathcal{M}$, the encoder transmits $x^n(m)$.
        \item \textit{Decoding:}
            After observing the sequences $Y^n=y^n$ and $S^n=s^n$, the decoder searches for a unique message $\hat{m}$ such that $(x^n(\hat{m}),y^n,s^n)\in\mathcal{A}_\alpha$.
            If such a unique $\hat{m}$ does not exist, then return a pre-defined arbitrary $m_0$.
            This defines $h(y^n,s^n)=\hat{m}$.
        \item \textit{Estimation:}
            Assuming the input sequence $x^n$ and measurement sequence $z^n$, the encoder computes estimates via the optimal estimator from Lemma \ref{lem:optimalEstimator}.
            \begin{equation}
                \hat{s}^n=g^*(x^n,z^n)=\left(g_1^*(x^1,z^1),g_2^*(x^2,z^2),\dots,g_n^*(x^n,z^n)\right)
            \end{equation}
        \item \textit{Analysis of Probability of Error:}
            Using threshold decoding, the expected probability of error over all length $n$ codebooks $C_n$ is known to be
            \begin{equation}
                E[P_e(C_n)]\leq P_{P_{X^n}P_{S^n}P_{Y^n|X^nS^n}}((X^n,Y^n,S^n)\not\in\mathcal{A}_\alpha^n)+M2^{-\alpha}.
            \end{equation}
            In the most general sense, any rate $R$ is achievable if
            \begin{equation}
                R<\sup_{\{P_{X^n}\}_{n\geq1}}\text{p-}\liminf_{n\to\infty}\frac{1}{n}\log\frac{P_{Y^n|X^nS^n}}{P_{Y^n|S^n}}
            \end{equation}
            where
            \begin{equation}
                \text{p-}\liminf_{n\to\infty}X_n=\sup_{\beta>0}\{\beta:\lim_{n\to\infty}P_{X_n}(X_n\leq\beta)=0\}.
            \end{equation}
            Given the memoryless channel, open loop encoding, and Markov state, we can perform the following factorizations.
            \begin{align}
                P_{Y^nX^nS^n}(y^n,x^n,s^n)
                &=\prod_{i=1}^n\left[P_{Y|XS}(y_i|x_i,s_i)P_X(x_i)P_{S_i|S_{i-1}}(s_i|s_{i-1})\right]\\
                P_{Y^nS^n}(y^n,s^n)
                &=\prod_{i=1}^n\left[P_{Y|S}(y_i|s_i)P_{S_i|S_{i-1}}(s_i|s_{i-1})\right]
            \end{align}
            Therefore,
            \begin{align}
                \log\frac{P_{Y^n|X^nS^n}}{P_{Y^n|S^n}}
                &=\log\frac{\prod_{i=1}^n P_{Y_i|X_iS_i}}{\prod_{i=1}^n P_{Y_i|S_i}}\\
                &=\sum_{i=1}^n\log\frac{P_{Y_i|X_iS_i}}{P_{Y_i|S_i}}
            \end{align}
            which is bounded almost surely because $|\mathcal{Y}|<\infty$.
            Assuming an information stable system with well-defined limit and a finite codebook, we can write $\sup_{\{P_{X^n}\}_{n\geq1}}\text{p-}\liminf_{n\to\infty}$ as $\lim_{n\to\infty}\max_{P_{X^n}}$.
            Thus, all rates $R$ are achievable that satisfy
            \begin{equation}
                R<\lim_{n\to\infty}\max_{P_{X^n}} \frac{1}{n}\sum_{i=1}^n\log\frac{P_{Y_i|X_iS_i}}{P_{Y_i|S_i}}.
            \end{equation}
            Since $\{S_i\}_{i\geq1}$ is a homogeneous Markov chain, we can use a Hoeffding type inequality to show that $\frac{1}{n}\sum_{i=1}^n\log\frac{P_{Y_i|X_iS_i}}{P_{Y_i|S_i}}$ concentrates around its expectation \cite{fan2021hoeffding}, given by
            \begin{align}
                E\left[\frac{1}{n}\sum_{i=1}^n\log\frac{P_{Y_i|X_iS_i}}{P_{Y_i|S_i}}\right]
                &=\frac{1}{n}\sum_{i=1}^nE\left[\log\frac{P_{Y_i|X_iS_i}}{P_{Y_i|S_i}}\right],\\
                &=\frac{1}{n}\sum_{i=1}^nI(X_i;Y_i|S_i).
            \end{align}
            Therefore, rates $R$ are achievable if
            \begin{equation}
                R<\lim_{n\to\infty}\max_{P_{X^n}}\frac{1}{n}\sum_{i=1}^nI(X_i;Y_i|S_i).
            \end{equation}
        \item \textit{Analysis of Expected Distortion:}
            Consider the following upper bound of the expected distortion (averaged over the random codebook).
            \begin{align}
                \Delta^{(n)} &= E[d_{0,n}(s_0^n,\hat{s}_0^n)]\\
                &= E_{C^n}[E[d_{0,n}(s_0^n,\hat{s}_0^n)|\hat{m}]\\
                &= E[d_{0,n}(s_0^n,\hat{s}_0^n)|\hat{m}=m]P(\hat{m}=m)
                + E[d_{0,n}(s_0^n,\hat{s}_0^n)|\hat{m}\neq m]P(\hat{m}\neq m)\\
                &\leq E[d_{0,n}(s_0^n,\hat{s}_0^n)|\hat{m}=m](1-P_e) + D_{\textnormal{max}}P_e
            \end{align}
            where $D_{\textnormal{max}}$ is the maximum distortion experienced by the estimator.
            $D_{\textnormal{max}}$ is assumed to be finite either by system definition or definition of the state space.
            Since $P_X$ achieves $C(\frac{D}{1+\delta})$, we have,
            \begin{align}
                \Delta^{(n)}\leq E[d_{0,n}(s_0^n,\hat{s}_0^n)|\hat{m}=m] + D_{\textnormal{max}}P_e \\
                \leq \frac{D}{1+\delta} + D_{\textnormal{max}}P_e.
            \end{align}
            As $\delta\to0$, we have
            \begin{equation}
                \Delta^{(n)}\leq D + \varepsilon'
            \end{equation}
            where $\varepsilon'\triangleq D_{\textnormal{max}}P_e$.
    \end{enumerate}
    We see that for rates $R<C(D)$, the conditions in Definition \ref{def:achievability} are satisfied.
    Thus, there exists a code such that $C(D)$ is achievable.

\section{Proof of Lemma \ref{lem:bsExpectedARE} (Beam switching expected error)}\label{apd:bsExpectedARE}
    Beginning with (\ref{eqn:discreteARE}) we use Remark \ref{rmk:openLoopGamma} to compute the following.
    \begin{align}
        P_i&= AP_{i-1}A^T+Q-AP_{i-1}C^T(CP_{i-1}C^T+\mathbbm{1}\{\gamma_i=1\}R +\mathbbm{1}\{\gamma_i=\sigma\}\sigma R)^{-1}CP_{i-1}A^T\\
        &=AP_{i-1}A^T+Q-\mathbbm{1}\{x_i\in\mathcal{X}_s\}AP_{i-1}C^T(CP_{i-1}C^T+R)^{-1}CP_{i-1}A^T
    \end{align}
    Taking the expectation of both sides and using the law of total expectation,
    \begin{align}
        E[P_i] &= E[E[AP_{i-1}A^T+Q-\mathbbm{1}\{x_i\in\mathcal{X}_s\}AP_{i-1}C^T(CP_{i-1}C^T+R)^{-1}CP_{i-1}A^T|P_{i-1}]]\\
        &= E[AP_{i-1}A^T+Q-\lambda AP_{i-1}C^T(CP_{i-1}C^T+R)^{-1}CP_{i-1}A^T]\\
        &= E[\Gamma_{bs}(P_{i-1},\lambda)]
    \end{align}
    where $\Gamma_{bs}(P,\lambda)$ is defined in (\ref{eqn:intermittantARE}).

\section{Proof of Lemma \ref{lem:boundingExpectedCovariance} (Bounding expected covariance)}\label{apn:boundingExpectedCovariance}
    Under a mean square error distortion, the average per-block distortion is defined as the expected value of the trace of the estimation error covariance.
    By linearity of the trace and expectation operators, 
    \begin{equation}
        \Delta^{(n)} = E\left[d_{0,n}(s_0^n,\hat{s}_0^n)\right]
        = E[\text{tr}(P_n)] = \text{tr}(E[P_n]).
    \end{equation}

    By Theorem 4 of \cite{sinopoli_kalman_2004},
    \begin{equation}\label{eqn:pnBounding}
        S_n\leq E[P_n]\leq V_n,\hspace{1em}\forall n
    \end{equation}
    where $\lim_{n\to\infty}S_n=\bar{S}$ and $\lim_{n\to\infty}V_n=\bar{V}$ are solutions to the algebraic equations $\bar{S}=(1-\lambda)A\bar{S}A^T+Q$ and $\bar{V}=\Gamma_{\textnormal{bs}}(\bar{V},\lambda)$, respectively.
    (\ref{eqn:costConstraintInclusions}) follows from (\ref{eqn:pnBounding}).

\section{Proof of Theorem \ref{thm:beamSwitchingCapacity} (Beam switching capacity)}\label{apd:beamSwitchingCapacity}
\subsection{Converse Proof}
    Using steps identical to the proof of Theorem \ref{thm:generalCapacity}, we arrive at
    \begin{equation}
        nR\leq\sum_{i=1}^n I(X_i;Y_i|S_i)+\varepsilon'
    \end{equation}
    where $\varepsilon'\triangleq 1+P_c^{(n)}\log\vert\mathcal{M}\vert$.
    Specializing to the beam pointing problem, we substitute $X_i\mapsto(X_i,\Gamma_i)$, then compute the following,
    \begin{align}
        nR&\leq\sum_{i=1}^n I(X_i\Gamma_i;Y_i|S_i)+\varepsilon'\\
        &\leq\sum_{i=1}^n I(X_i;Y_i|S_i\Gamma_i)+\varepsilon'\\
        &=\sum_{i=1}^n I(X_i;Y_i|S_i\Gamma_i=1)\P_{\Gamma}\{\gamma_i=1\}+
        I(X_i;Y_i|S_i\Gamma_i=\infty)\P_{\Gamma}\{\gamma_i=\infty\} + \varepsilon'\label{eqn:bsCapacityProof1}\\
        &=(1-\lambda)\sum_{i=1}^n I(X_i;Y_i|S_i) + \varepsilon'\label{eqn:bsCapacityProof2}\\
        &=(1-\lambda)\sum_{i=1}^n I(X_i;Y_i) + \varepsilon'\label{eqn:bsCapacityProof3}
    \end{align}
    where (\ref{eqn:bsCapacityProof1}) follows from the definition of conditional mutual information,
    (\ref{eqn:bsCapacityProof2}) is a result of $\gamma_i=1$ implying zero communication,
    and (\ref{eqn:bsCapacityProof3}) follows from \ref{itm:bpDef2}.
    
    Next, perform the following maximization using the outer cost constrained set $\mathcal{P}_{\Lambda_S}$.
    \begin{align}
        (1-\lambda)\sum_{i=1}^n I(X_i;Y_i) + \varepsilon'
        &\leq\max_{\lambda\in\mathcal{P}_{\Lambda_S}}(1-\lambda)\sum_{i=1}^n I(X_i;Y_i) + \varepsilon'\label{eqn:bsCapacityProof4}\\
        &\leq\max_{\lambda\in\mathcal{P}_{\Lambda_S}}(1-\lambda)n\max_{P_X\in\mathcal{P}_X} I(X;Y) + \varepsilon'\label{eqn:bsCapacityProof5}
    \end{align}

    Dividing by $n$ and taking the limit yields, 
    \begin{align}
        R&\leq\lim_{n\to\infty}\bigg(\max_{
        \begin{aligned}
            {\scriptstyle P_{X}\in\mathcal{P}_X}\\[-0.5em]
            {\scriptstyle \lambda\in\mathcal{P}_{\Lambda_S}}
        \end{aligned}
        }(1-\lambda)I(X;Y)+\frac{1}{n}\varepsilon'\bigg)\\
        &=\max_{
        \begin{aligned}
            {\scriptstyle P_{X}\in\mathcal{P}_X}\\[-0.5em]
            {\scriptstyle \lambda\in\mathcal{P}_{\Lambda_S}}
        \end{aligned}
        }(1-\lambda)I(X;Y).
    \end{align}
    
\subsection{Achievability Proof}
    Let
    \begin{equation}
        C_V(D) = \max_{
        \begin{aligned}
            {\scriptstyle P_{X}\in\mathcal{P}_X}\\[-0.5em]
            {\scriptstyle \lambda\in\mathcal{P}_{\Lambda_V}}
        \end{aligned}
        }(1-\lambda)I(X;Y).
    \end{equation}
    We prove the claim that for any $R<C_V(D)$, $R,D$ is achievable according to the criteria from Definition $\ref{def:achievability}$.
    
    Fix $P_{X\Gamma}$ and $g^*(X^n,Z^n)$ that achieve $R=C_V(\frac{D}{1+\delta})<C_V(D)$ for some $\delta>0$.
    \begin{enumerate}
        \item \textit{Codebook generation:}
            Randomly generate $2^{nR}$ sequences $\{x^n(m)\}_{m=1}^{2nR}$ where $x^n\sim\prod_{i=1}^{n}p_X(x)$.
            These sequences constitute the codebook $\mathcal{C}$.
            Reveal the codebook to the encoder and decoder.

            Under the beam switching model, the sequence $\gamma^n\sim\prod_{i=1}^np(\gamma_i)$ represents a symbol erasure, where $p(\gamma_i)$ is the distribution resulting in \ref{eqn:beamSwitchingGamma}.
            It follows that with probability $1-\lambda$, the symbol will pass through the communication channel.
            With probability $\lambda$, the symbol is erased because of a sensing operation.
        \item \textit{Encoding:}
            To send the message $m\in\mathcal{M}$, the encoder transmits $x^n(m)$.
        \item \textit{Decoding:}
            Let $\mathcal{A}_\varepsilon^{(n)}(P_{XY})$ be the set of jointly-typical sequences of $X$ and $Y$.
            After observing the sequence $Y^n=y^n$, the decoder searches for a message $\hat{m}$ such that
            \begin{equation}
                (x^n(\hat{m}),y^n)\in\mathcal{A}_\varepsilon^{(n)}(P_{XY})
            \end{equation}
        \item \textit{Estimation:}
            Assuming the input sequence $\gamma^n$ and measurement sequence $z^n$, the encoder computes estimates via the intermittent Kalman filter.
            \begin{equation}
                \hat{s}^n=(\hat{s}_{1|1}(\gamma_1,z_1),\hat{s}_{2|2}(\gamma_2,z_2)\cdots\hat{s}_{n|n}(\gamma_n,z_n))
            \end{equation}
        \item \textit{Analysis of Probability of Error:}
            By the symmetry of the code, we restrict our attention to $m=1$.
            We define the following communication error events.
            \begin{align}
                \mathcal{E}_1&=\{(X^n(1),Y^n) \not\in \mathcal{A}_\varepsilon^{(n)}\}\\
                \mathcal{E}_2&=\{(X^n(m'),Y^n) \in \mathcal{A}_\varepsilon^{(n)},m'\neq1\}
            \end{align}
            Note that the state sequence $S^n$ is not included in the error events because of the assumption that the channel model factors as $P_{YZ|XS}=P_{Y|X}P_{Z|XS}$.

            The probability of error is then
            \begin{equation}
                P_e^{(n)}=P(\mathcal{E}_1\cup\mathcal{E}_2)\leq P(\mathcal{E}_1)+P(\mathcal{E}_2).
            \end{equation}
            $P(\mathcal{E}_1)$ tends to $0$ as $n\to\infty$ by the law of large numbers.
            Using joint typicality arguments (keeping the probability of `erasure' in mind) $P(\mathcal{E}_2)$ tends to $0$ for rates $R<(1-\lambda)I(X;Y)$.
        \item \textit{Analysis of Expected Distortion:}
            From the achievability proof of Theorem \ref{thm:generalCapacity}, we have
            \begin{equation}
                \Delta^{(n)}\leq E[d_{0,n}(s_0^n,\hat{s}_0^n)|\hat{m}=1]+D_{\textnormal{max}}P_e.
            \end{equation}
            By construction of the beam pointing problem, $D_{\textnormal{max}}$ is finite.
            By Lemma \ref{lem:boundingExpectedCovariance},
            \begin{align}
                \Delta^{(n)}&\leq E[d_{0,n}(s_0^n,\hat{s}_0^n)|\hat{m}=1]+D_{\textnormal{max}}P_e \\
                &= \text{tr}(E[P_n])+\varepsilon' \\
                &\leq V_n+\varepsilon'
            \end{align}
            where $\varepsilon'\triangleq D_{\textnormal{max}}P_e$.
            Taking the limit as $n\to\infty$ yields
            \begin{align}
                \lim_{n\to\infty}\Delta^{(n)}&\leq\bar{V}\leq \frac{D}{1+\delta}
            \end{align}
            By continuity, $C_V(\frac{D}{1+\delta})$ approaches $C_V(D)$ as $\delta\to0$.
    \end{enumerate}
    Thus, there exists a code such that $C_V(D)$ is achievable.

\section{Proof of Theorem \ref{thm:multiBeamCapacity} (Multi-beam capacity)}\label{apd:multiBeamCapacity}
\subsection{Converse Proof}
    We have shown previously that
    \begin{equation}
        nR\leq\sum_{i=1}^n I(X_i;Y_i|S_i)+\varepsilon'
    \end{equation}
    where $\varepsilon'=1+P_c^{(n)}\log\vert\mathcal{M}\vert$.
    Specializing to the beam pointing problem, we substitute $X_i\mapsto(X_i,\Gamma_i)$, then compute the following.
    \begin{align}
        nR&\leq\sum_{i=1}^n I(X_i\Gamma_i;Y_i|S_i)+\varepsilon'\\
        &\leq\sum_{i=1}^n I(X_i;Y_i|S_i\Gamma_i)+\varepsilon'\\
        &=\sum_{i=1}^n I(X_i;Y_i|S_i\Gamma_i=\gamma_0) + \varepsilon'\label{eqn:mbCapacityProof1}\\
        &=\sum_{i=1}^n I(X_i;Y_i|\Gamma_i=\gamma_0) + \varepsilon'\label{eqn:mbCapacityProof2}
    \end{align}
    where (\ref{eqn:mbCapacityProof1}) follows from the multi-beam strategy and (\ref{eqn:mbCapacityProof2}) follows from \ref{itm:bpDef2}.

    We maximize over $X_i$ and $\gamma_0$ separately to incorporate the distortion constraint.
    \begin{align}
        \sum_{i=1}^n I(X_i;Y_i|\Gamma_i=\gamma_0) + \varepsilon'
        &=\sum_{i=1}^n \max_{\gamma_0\in\mathcal{G}(D)} I(X_i;Y_i|\Gamma_i=\gamma_0) + \varepsilon'\\
        &= n\max_{P_{X}\in\mathcal{P}_X}\max_{\gamma_0\in\mathcal{G}(D)} I(X;Y|\Gamma=\gamma_0) + \varepsilon'
    \end{align}

    Dividing by $n$ and taking the limit yields,
    \begin{align}
        R&\leq\lim_{n\to\infty}\bigg(\max_{
        \begin{aligned}
            &{\scriptstyle P_{X}\in\mathcal{P}_X}\\[-0.5em]
            &{\scriptstyle \gamma_0\in\mathcal{G}(D)}
        \end{aligned}}
        I(X;Y|\Gamma=\gamma_0) + \frac{1}{n}\varepsilon'\bigg)\\
        &=\max_{
        \begin{aligned}
            {\scriptstyle P_{X}\in\mathcal{P}_X}\\[-0.5em]
            {\scriptstyle \gamma_0\in\mathcal{G}(D)}
        \end{aligned}
        }I(X;Y|\Gamma=\gamma_0).
    \end{align}
    
\subsection{Achievability Proof}
    Fix $P_{X}$, $\gamma_0$ and $g^*(X^n,Z^n)$ that achieve $R=C_{mb}(\frac{D}{1+\delta})$ for some $\delta>0$.
    \begin{enumerate}
        \item \textit{Codebook generation:}
            Randomly generate $2^{nR}$ sequences $\{x^n(m)\}_{m=1}^{2nR}$ where $x^n\sim\prod_{i=1}^{n}p_X(x)$.
            These sequences constitute the codebook $\mathcal{C}$.
            Reveal the codebook to the encoder and decoder.

            Under the multi-beam model, $\gamma_0$ is a known gain that parameterizes the channel model.
        \item \textit{Encoding:}
            To send the message $m\in\mathcal{M}$, the encoder transmits $x^n(m)$.
        \item \textit{Decoding:}
            After observing the sequence $Y^n=y^n$, the decoder searches for a message $\hat{m}$ such that
            \begin{equation}
                (x^n(\hat{m}),y^n)\in\mathcal{A}_\varepsilon^{(n)}(P_{XY})
            \end{equation}
            where $\mathcal{A}_\varepsilon^{(n)}(P_{XY})$ is again the set of jointly typical inputs and outputs.
        \item \textit{Estimation:}
            Assuming the known gain $\gamma_0$ and measurement sequence $z^n$, the encoder computes estimates via the standard Kalman filter.
            \begin{equation}
                \hat{s}^n=(\hat{s}_{1|1}(z_1),\hat{s}_{2|2}(z_2)\cdots\hat{s}_{n|n}(z_n))
            \end{equation}
        \item \textit{Analysis of Probability of Error:}
            By the symmetry of the code, we restrict our attention to $m=1$.
            We define the following communication error events.
            \begin{align}
                \mathcal{E}_1&=\{(X^n(1),Y^n) \not\in \mathcal{A}_\varepsilon^{(n)}\}\\
                \mathcal{E}_2&=\{(X^n(m'),Y^n) \in \mathcal{A}_\varepsilon^{(n)},m'\neq1\}
            \end{align}
            Note that the state sequence $S^n$ is not included in the error events because of the assumption that the channel model factors as $P_{YZ|XS}=P_{Y|X}P_{Z|XS}$.

            The probability of error is then
            \begin{equation}
                P_e^{(n)}=P(\mathcal{E}_1\cup\mathcal{E}_2)\leq P(\mathcal{E}_1)+P(\mathcal{E}_2).
            \end{equation}
            $P(\mathcal{E}_1)$ tends to $0$ as $n\to\infty$ by the law of large numbers.
            Using joint typicality arguments $P(\mathcal{E}_2)$ tends to $0$ for rates $R<I(X;Y|\Gamma=\gamma_0)$.
        \item \textit{Analysis of Expected Distortion:}
            From the achievability proof of Theorem \ref{thm:generalCapacity}, we have
            \begin{equation}
                \Delta^{(n)}\leq E[d_{0,n}(s_0^n,\hat{s}_0^n)|\hat{m}=1]+D_{\textnormal{max}}P_e.
            \end{equation}
            By construction of the beam pointing problem, $D_{\textnormal{max}}$ is finite.
            Continuing, we have,
            \begin{align}
                \Delta^{(n)}&\leq E[d_{0,n}(s_0^n,\hat{s}_0^n)|\hat{m}=1]+D_{\textnormal{max}}P_e \\
                &= \text{tr}(P_n)+\varepsilon'
            \end{align}
            where $\varepsilon'=D_{\textnormal{max}}P_e$.
            Taking the limit as $n\to\infty$ yields
            \begin{align}
                \lim_{n\to\infty}\Delta^{(n)}&\leq\text{tr}(P)\leq \frac{D}{1+\delta}
            \end{align}
            where $P$ is the solution to the algebraic ricatti equation $\Gamma_{\textnormal{mb}}(P,\gamma_0)$.
            Letting $\delta$ tend to $0$ completes the proof.
    \end{enumerate}
    Therefore, there exists a code such that $C_{mb}(D)$ is achievable.

\end{document}